\apptocmd{\sloppy}{\hbadness 10000\relax}{}{} 
\pgfplotsset{compat=newest, compat/show suggested version=false} 
\tikzstyle{box}=[draw=black, shape=rectangle, fill=white, minimum size=.95em, inner sep=0.15em, scale=0.85, font={\scriptsize}]
\tikzstyle{gbox}=[box, draw=black, shape=rectangle, fill={zx_green}, tikzit fill={rgb,255: red,181; green,215; blue,181}, tikzit shape=rectangle]
\tikzstyle{hbox}=[box, draw=black, shape=rectangle, fill=yellow, minimum size=.55em]
\tikzstyle{gn}=[draw=black, shape=circle, fill={zx_green}, inner sep=0.7mm, minimum width=0pt, minimum height=0pt, tikzit fill={rgb,255: red,181; green,215; blue,181}]
\tikzstyle{gn_phase}=[shape=rectangle, fill={zx_green}, draw=black, minimum size=1em, rounded corners=0.4em, inner sep=0.2em, outer sep=-0.2em, scale=0.8, font={\scriptsize}, tikzit shape=circle, tikzit fill={rgb,255: red,130; green,188; blue,130}]
\tikzstyle{rn}=[gn, fill={zx_red}, draw=black, tikzit fill={rgb,255: red,255; green,165; blue,165}]
\tikzstyle{rn_phase}=[{gn_phase}, fill={zx_red}, draw=black, tikzit fill={rgb,255: red,215; green,96; blue,96}]
\tikzstyle{pn}=[gn, fill={zx_pink}, draw=black, tikzit fill=pink]
\tikzstyle{pn_phase}=[{gn_phase}, fill={zx_pink}, draw=black, tikzit fill=red]
\tikzstyle{rtriang}=[shape=isosceles triangle, fill=yellow, draw=black, isosceles triangle stretches=true, inner sep=0.8pt, minimum width=0.25cm, minimum height=2mm]
\tikzstyle{ltriang}=[rtriang, shape=isosceles triangle, fill=yellow, draw=black, shape border rotate=180]
\tikzstyle{utriang}=[rtriang, shape=isosceles triangle, fill=yellow, draw=black, shape border rotate=90]
\tikzstyle{dtriang}=[rtriang, shape=isosceles triangle, fill=yellow, draw=black, shape border rotate=-90]
\tikzstyle{lmat}=[shape=signal, signal to=west, signal from=east, fill={zx_grey}, draw=black, minimum height=6pt, inner sep=.75pt, font={\scriptsize \boldmath}, tikzit fill=gray, tikzit category=GLA]
\tikzstyle{rmat}=[lmat, shape=signal, signal to=east, signal from=west, tikzit fill=gray, tikzit category=GLA]
\tikzstyle{dmat}=[lmat, shape=signal, signal to=west, signal from=east, tikzit fill=gray, tikzit category=GLA, rotate=90]
\tikzstyle{umat}=[lmat, shape=signal, signal to=east, signal from=west, tikzit fill=gray, tikzit category=GLA, rotate=90]
\tikzstyle{uw}=[shape=isosceles triangle, isosceles triangle stretches=true, fill=black, draw=black, minimum width=2.8mm, minimum height=2mm, inner sep=1pt, outer sep=0pt, shape border rotate=90]
\tikzstyle{dw}=[uw, shape border rotate=-90, tikzit fill=black, tikzit draw=black]
\tikzstyle{d_split}=[shape=trapezium, fill=white, draw=black, inner sep=0pt, trapezium stretches body, text width=15pt, text height=7pt]
\tikzstyle{d_merge}=[{d_split}, shape=trapezium, draw=black, rotate=180]
\tikzstyle{sd_split}=[shape=trapezium, fill=white, draw=black, inner sep=0pt, trapezium stretches body, text width=10pt, text height=5pt]
\tikzstyle{sd_merge}=[{sd_split}, shape=trapezium, draw=black, rotate=180]
\tikzstyle{wire label}=[font={\tiny}, auto]
\tikzstyle{control}=[draw=black, shape=circle, fill=black, inner sep=0.5mm]
\tikzstyle{braceedge}=[-, decorate, decoration={brace, amplitude=2mm, raise=-1mm}]
\tikzstyle{dotsedge}=[-, dotted, decoration={brace, amplitude=2mm, raise=-1mm}]
\definecolor{zx_grey}{RGB}{211,211,211}
\definecolor{zx_red}{RGB}{232,165,165}
\definecolor{zx_pink}{RGB}{255, 130, 160}
\definecolor{zx_green}{RGB}{216,248,216}
\newcommand{\N}{\mathbb{N}}
\newcommand{\Z}{\mathbb{Z}}
\renewcommand{\C}{\mathbb{C}}}
\newcommand{\C}{\mathbb{C}}}
\newcommand{\minu}{\texttt{-}}
\newcommand{\plus}{\texttt{+}}
\newcommand{\interp}[1]{\left\llbracket#1\right\rrbracket}
\newcommand{\ZXWf}{\mathbf{ZXW_{f}}}
\newcommand{\ZXWd}{\mathbf{ZXW_{d}}}
\newcommand{\FHilb}{\textbf{FHilb}}
\newcommand{\lst}[1]{\left(#1\right)}
\newcommand{\bR}{\begin{color}{red}}
\newcommand{\bB}{\begin{color}{blue}}
\newcommand{\bM}{\begin{color}{magenta}}
\newcommand{\bC}{\begin{color}{cyan}}
\newcommand{\bW}{\begin{color}{white}}
\newcommand{\bBl}{\begin{color}{black}}
\newcommand{\bG}{\begin{color}{green}}
\newcommand{\bY}{\begin{color}{yellow}}
\newcommand{\e}{\end{color}}
\newcounter{axiom}
\newcommand{\axiom}[1]{(#1)\refstepcounter{axiom}\label{rule:#1}}
\newcommand{\tikzrefsize}[1]{\scriptsize{#1}}
\newcommand{\axref}[1]{\tikzrefsize{(\hyperref[rule:#1]{#1})}}
\newcommand{\textaxref}[1]{Axiom~(\hyperref[rule:#1]{#1})}
\newcommand{\textaxrefs}[2]{Axioms~(\hyperref[rule:#1]{#1}) and~(\hyperref[rule:#2]{#2})}
\newcommand{\lemref}[1]{\tikzrefsize{\eqref{#1}}}
\newcommand{\Mod}[1]{\ (\mathrm{mod}\ #1)}
\newcommand{\zxw}[1]{\cite[#1]{poorCompletenessArbitraryFinite2023}}
\newcommand{\dcol}[1]{}
\newcommand{\scol}[1]{#1}
\newcommand{\Description}[1]{}
\begin{document}

\title[Completeness of qufinite ZXW calculus]{Completeness of qufinite ZXW calculus,\texorpdfstring{\\}{} a graphical language for finite-dimensional\texorpdfstring{\\}{} quantum theory}

\author[Q. Wang]{Quanlong Wang}[a]
\author[B. Poór]{Boldizsár Poór}[a,b]
\author[R.\@ A.\@ Shaikh]{Razin A.\@ Shaikh}[a,b]

\address{Quantinuum, 17 Beaumont Street, Oxford, OX1 2NA, United Kingdom}
\address{University of Oxford, Oxford, United Kingdom}

\begin{abstract}
  
Finite-dimensional quantum theory serves as the theoretical foundation for quantum information and computation.
Mathematically, it is formalized in the category \textbf{FHilb}, comprising all finite-dimensional Hilbert spaces and linear maps between them.
However, there has not been a graphical language for \textbf{FHilb} which is both universal and complete and thus incorporates a set of rules rich enough to derive any equality of the underlying formalism solely by rewriting.
In this paper, we introduce the qufinite ZXW calculus — a graphical language for reasoning about finite-dimensional quantum theory.
We set up a unique normal form to represent an arbitrary tensor and prove the completeness of this calculus by demonstrating that any qufinite ZXW diagram can be rewritten into its unique normal form.
This result implies the equivalence of the qufinite ZXW calculus and the category \textbf{FHilb}, leading to a purely diagrammatic framework for finite-dimensional quantum theory with the same reasoning power.
In addition, we identify several domains where the application of the qufinite ZXW calculus holds promise.
These domains include spin networks, interacting mixed-dimensional systems in quantum chemistry, quantum programming, high-level description of quantum algorithms, and mixed-dimensional quantum computing.
Our work paves the way for a comprehensive diagrammatic description of quantum physics, opening the doors of this area to the wider public.

\end{abstract}

\maketitle

\section{Introduction}\label{sec:introduction}

Finite-dimensional quantum theory is a fundamental theory of nature,
governing the behaviour of fundamental particles as well as complex interactions between such systems.
It serves as the backbone of various areas such as quantum chemistry and condensed matter physics as well as quantum information and computation~\cite{nielsenQuantumComputationQuantum2010,coeckePicturingQuantum2017}.
Mathematically, finite-dimensional quantum theory is formalized within the compact closed category $\FHilb$~\cite{abramskyCategoricalSemanticsQuantum2004,selingerFiniteDimensional2012}.
This category comprises all finite-dimensional Hilbert spaces and linear maps between them.\footnote{In this paper, we omit the $0$-dimensional Hilbert space, as it is not required in our formalism.}

Graphical languages~\cite{selingerSurveyGraphical2011} have gained substantial influence in quantum information processing~\cite{abramskyCategoricalQuantumMechanics2008, heunenCategoriesQuantumTheory2019}.
Among these, the ZX calculus~\cite{coeckeInteractingQuantumObservables2008, coeckeInteractingQuantumObservables2011} stands out as a prominent graphical language broadly applied in the field of quantum computation.
However, other calculi are also gaining popularity, such as the ZW~\cite{coeckeThreeQubitEntanglement2011, hadzihasanovicDiagrammaticAxiomatisationQubit2015},
ZH~\cite{backensZHCompleteGraphical2019, royQuditZHCalculusGeneralised2023},
and the ZXW calculus~\cite{shaikhHowSum2023, poorCompletenessArbitraryFinite2023, defeliceLightMatterInteractionZXW2023}.
Since their introduction, these calculi have significantly contributed to various domains,
including quantum circuit optimization~\cite{debeaudrapFastEffective2020, debeaudrapTechniquesReducePi2020, kissingerReducingTcountZXcalculus2020},
quantum error correction~\cite{debeaudrapZXCalculus2020, kissingerPhasefreeZXDiagrams2022, khesinGraphicalQuantumCliffordencoder2023},
measurement-based quantum computation~\cite{coeckeInteractingQuantumObservables2007, duncanGraphStatesNecessity2009, kissingerUniversalMBQC2019},
quantum natural language processing~\cite{coeckeFoundationsNearTermQuantum2020, meichanetzidisQuantumNaturalLanguage2021},
quantum machine learning~\cite{wangDifferentiatingIntegrating2024},
classical simulation~\cite{kissingerClassicalSimulationQuantum2022, codsiClassicallySimulatingQuantum2023, camSpeedingQuantumCircuits2023, kochContractionZX2024},
quantum foundations~\cite{coeckePhaseGroupsOrigin2011, backensCompleteGraphicalCalculus2016},
quantum chemistry~\cite{shaikhHowSum2023, defeliceLightMatterInteractionZXW2023},
complexity theory~\cite{debeaudrapTensorNetworkRewriting2021, laakkonenPicturingCountingReductions2023, laakkonenGraphicalSATAlgorithm2022},
quantum programming languages~\cite{caretteSZXCalculusScalable2019, borgnaEncodingHighlevelQuantum2023},
cognition~\cite{signorelliCompositionalModel2021},
and education in quantum computing~\cite{coeckeQuantumPictures2022, dundar-coeckeQuantumPicturalismLearning2023}.

Specific subcategories of $\FHilb$ are of particular interest in the study of quantum information processing.
Among them, $\FHilb_2$ is the most widely used subcategory, where Hilbert spaces have dimensions of $2^n$.
The states in this category are called \emph{qubits}, and all graphical languages mentioned above were originally developed for such systems.
However, graphical languages have since been extended to $\FHilb_3$, using \emph{qutrits}~\cite{wangQutritZXcalculusComplete2018}, $\FHilb_p$ where $p$ is an odd prime with \emph{quopits} as states~\cite{boothCompleteZXcalculi2022,poorQupitStabiliserZXtravaganza2023}, and $\FHilb_d$ for arbitrary \emph{qudits}~\cite{poorCompletenessArbitraryFinite2023}.
However, no language has previously been developed that can reason about the entirety of $\FHilb$.

In this paper, we introduce the qufinite ZXW calculus, a language that can perform any reasoning that can be done in $\FHilb$.
With this language we can address problems in a wide range of domains based on finite-dimensional quantum theory.
It enables the study of molecular interactions in quantum chemistry or explore spin networks for quantum gravity.
Another natural avenue is mixed-dimensional quantum computing~\cite{khanSynthesisMultiquditHybrid2006,baekkegaardRealizationEfficient2019} --- where circuits may have qudits of varying dimensions.
This approach has found applications in fields including the compression of quantum circuits~\cite{matoCompressionQubitCircuits2023}, more efficient synthesis of gates~\cite{lanyonSimplifyingQuantumLogic2009,diSynthesisMultivalued2013}, and native simulation of certain physical systems~\cite{banulsSimulatingLatticeGauge2020} such as nitrogen-vacancy center systems~\cite{bernienHeraldedEntanglement2013,awschalomQuantumSpintronics2013}.

Graphical calculi are assessed based on three fundamental properties, which are crucial for their effectiveness:
\begin{enumerate}
  \item \emph{Soundness}:
  The interpretation of any equality of diagrams is a valid equality of linear maps in $\FHilb$.
  In category theoretic terms, this interpretation from the graphical calculus category to its semantic category is a symmetric monoidal functor.

  \item \emph{Universality}:
  A robust graphical language should be capable of expressing every linear map within its framework.
  Category theoretically, this means that the interpretation functor is full.

  \item \emph{Completeness}:
  Any equation involving linear maps derivable in multilinear algebra should also be derivable within the graphical language through the process of rewriting.
  In other words, the interpretation functor is faithful.
\end{enumerate}
Of these properties, completeness poses the greatest challenge to prove.

Although the qubit ZX calculus was first formulated in 2007~\cite{coeckeInteractingQuantumObservables2007}, the crucial achievement of proving completeness was not realized until 2017~\cite{ngUniversalCompletionZXcalculus2017, jeandelDiagrammaticReasoningClifford2018}.
This achievement unfolded in numerous stages, each progressively expanding the fragment for which completeness was established~\cite{backensZXcalculusComplete2014, backensZXcalculusCompleteSinglequbit2014, jeandelCompleteAxiomatisationZXCalculus2018, hadzihasanovicTwoCompleteAxiomatisations2018}.
Smilarly, proofs of completenesses for qudit graphical calculi unfolded in several stages:
initially encompassing the stabilizer fragment of the qutrit ZX calculus~\cite{wangQutritZXcalculusComplete2018}, then the quopit ZX calculus with all odd prime dimensions~\cite{boothCompleteZXcalculi2022}, and most recently, the proof of completeness for the universal fragment of qudit ZXW calculus~\cite{poorCompletenessArbitraryFinite2023}.



\subsection{Our contributions}\label{subsec:our-contributions}

In this paper, we establish the framework of qufinite ZXW calculus whose category we denote as $\ZXWf$.
This new calculus incorporates the generators of the qudit ZXW calculi~\cite{poorCompletenessArbitraryFinite2023} and the dimension splitter (as well as its transpose, the dimension merger) from the qufinite ZX calculus~\cite{wangQufiniteZXcalculusUnified2022} to split (or combine) different dimensions.
Furthermore, we introduce a novel addition to our calculus, the \emph{mixed-dimensional Z-spider}, which enables the interaction of different dimensional wires.
Our set of rewrite rules includes that of the qudit ZXW calculus~\cite{poorCompletenessArbitraryFinite2023} with some rules generalized to the mixed-dimensional setting, along with two additional mixed-dimensional rules.

We establish a new normal form in the qufinite ZXW calculus, corresponding to an arbitrary tensor in \autoref{subsec:normal-form}.
With this normal form, we immediately obtain the universality of the calculus for multilinear algebra over complex numbers.

As one of our key results, we prove the completeness of the qufinite ZXW calculus in \autoref{completetm}.
We do this by demonstrating that any qufinite ZXW diagram can be rewritten into its normal form, since:
(1) all generators can be rewritten into their normal forms,
(2) the tensor product of any two normal forms can be rewritten into a single normal form,
and (3) a partially traced normal form can be rewritten into a normal form.

We develop a new technique (\autoref{lem:post-compose-mixed-z}) that enables the representation of mixed-dimensional quantum circuits embedded in a larger statespace.
This method proves to be a useful resource in many situations while proving completeness.
\begin{gather*}
    \tikzfig{figures/qufit-lemmas/multidim_diags_eq}
    \Longleftrightarrow
    \tikzfig{figures/qufit-lemmas/embedded_multidim_diags_eq}
\end{gather*}
On the other hand, this can allow a more efficient representation of certain qudit diagram as a mixed-dimensional quantum circuit.

Building on the completeness result, we prove that the category $\ZXWf$ is monoidally equivalent to the category $\FHilb$ in \autoref{catequivallence}.
This result implies that the diagrammatic formalism of $\ZXWf$ has the same reasoning power as $\FHilb$.
Therefore, any computation of $\FHilb$ now can be done solely with diagrammatic rewriting.

Finally, we explore potential applications of the qufinite ZXW calculus in \autoref{sec:apply}.
Using the quantum Fourier transformation as an example, we demonstrate that the calculus can express complicated quantum computations concisely, and could potentially serve as a high-level language for quantum computing.
We also show that the calculus is suitable for studying angular momentum coupling and spin networks by presenting a compact diagram for irreducible representations of $SU(2)$.
Furthermore, we argue that qufinite ZXW provides a valuable toolbox for reasoning about the Hamiltonians of interacting systems in quantum chemistry by representing the Jaynes-Cummings model diagrammatically

\section{Qufinite ZXW calculus}\label{sec:qufitzxw}

Graphical calculi are usually described in the framework of a strict monoidal category whose objects are spanned by one object.
These are called PROPs~\cite{bonchiInteractingHopfAlgebras2017} and are presented in terms of generators and rewrite rules.
For quantum computing, one usually interprets diagrams of a graphical calculus in a subcategory of $\FHilb$ called the semantic category.
However, to have a graphical calculus whose semantics cover the whole of $\FHilb$, the framework of PROP is not sufficient.
A solution to this problem is the use of a \emph{coloured PROP}~\cite{yauHigherDimensionalAlgebras2008,hackneyCategoryProps2015,caretteColoredPropsLarge2020} whose objects are finite lists of colours.

We introduce the qufinite ZXW calculus, whose semantics cover all of $\FHilb$.
This calculus is a coloured PROP denoted as $\ZXWf$ where the colours are positive integers, corresponding the dimension of each qubit.
In particular, we define the symmetric monoidal category $\ZXWf$ with objects as lists of dimensions $\lst{d_i}_{i=1}^n$ where $d_i \in \N \setminus \{0\}$ and morphisms generated by the diagrams,
for any $d, e \in \N \setminus \{0\}$ and $n, m \in \N \cup \{0\}$, and $\overrightarrow a \in \C^{d \minu 1}$:
\begin{align*}
  \tikzfig{qufit-generators/generalgreenspiderqdit2} &: \lst{d}^{\otimes n} \to \lst{d}^{\otimes m} & \qquad
  \tikzfig{qufit-generators/HadaDecomSingleslt} &: \lst{d} \to \lst{d} & \qquad
  \tikzfig{qufit-generators/w1to2} &: \lst{d} \to \lst{d,d} \\
  \tikzfig{qufit-generators/dimensionsplitter} &: \lst{de} \to \lst{d, e} & \qquad
  \tikzfig{qufit-generators/swap} &: \lst{d,e} \to \lst{e,d} & \qquad
  \tikzfig{qudit-generators/Id} &: \lst{d} \to \lst{d}
\end{align*}
Diagrams are to be read top-to-bottom, as implied by the interpretation below.
Diagrams can be composed sequentially, by connecting input and output wires, and in parallel, by placing them side-by-side.
We also define the category of the qudit ZXW calculus as $\ZXWd$ in accordance with~\cite{poorCompletenessArbitraryFinite2023}, where objects are numbers (the number of wires) and morphisms are the above generators where only a single dimension appears in the diagram.

The qufinite ZXW calculus serves as a unified framework for all finite-dimensional qudit ZXW calculi~\cite{poorCompletenessArbitraryFinite2023}.
Working with mixed-dimensions, each wire is labelled with its dimension, and only wires with the same dimension can be connected.
Since a wire labelled with dimension $1$ is just an empty diagram, throughout this paper each wire label will represent an integer strictly bigger than~$1$.

In this section, we delve into the details of the qufinite ZXW calculus, providing a thorough understanding of its structure and capabilities.
We begin with an introduction to the generators of qudit ZXW calculus as it is presented in~\cite{poorCompletenessArbitraryFinite2023}, describing their interpretation as linear maps, and presenting useful notations.
Then, we introduce the mixed-dimensional generators of the language~\cite{wangQufiniteZXcalculusUnified2022}, including the swap, the dimension-split, and its transpose, the dimension-merge.
We also define the generalization of the Z-spider to mixed-dimensions.
Finally, we present a complete set of rules governing this calculus composed of the qudit ZXW rules inherited from~\cite{poorCompletenessArbitraryFinite2023}, mixed-dimensional generalizations of six qudit rules, and two new mixed-dimensional rules.

\subsection{Qudit generators and their interpretation\texorpdfstring{~\cite{poorCompletenessArbitraryFinite2023}}{}}\label{subsec:generators_n_interpretation}

This subsection assumes that each wire is labelled with a fixed positive integer $d$;
thus, for convenience, labels are omitted here.
The generators of the qudit ZXW calculus together with their standard interpretation $\interp{\cdot}$ are:
\begin{itemize}
  \item The \emph{Hadamard box},
  \[
    \tikzfig{qufit-generators/HadaDecomSingleslt}
    \quad \overset{\interp{\cdot}}{\longmapsto} \quad
    \frac{1}{\sqrt{d}}\sum_{k, j=0}^{d-1}\omega^{jk}\ket{j}\bra{k},
  \]
  where $\omega = e^{i\frac{2\pi}{d}}$ is the $d$-th root of unity.
  That is, it corresponds to the adjoint of the discrete Fourier transform matrix on d-dimensions.
  Notably, the qudit Hadamard box is not self-adjoint anymore as it is in the qubit case, so two Hadamard boxes do not equal the identity.

  \item The \emph{Z box},
  \[
    \tikzfig{qufit-generators/generalgreenspiderqdit2}
    \quad \overset{\interp{\cdot}}{\longmapsto} \quad
    \sum_{j=0}^{d-1}a_j\ket{j}^{\otimes m}\bra{j}^{\otimes n},
  \]
  where $a_0 = 1$ and $\overrightarrow{a}=(a_1, \cdots, a_{d-1})$ is an arbitrary complex vector, and we take the indices modulo $d$, that is, $a_j = a_{j\ \mathrm{mod}\ d}$ for $j \in \Z$.

  \item The \emph{W node},
  \[
    \tikzfig{qufit-generators/w1to2}
    \quad \overset{\interp{\cdot}}{\longmapsto} \quad
    \ket{00}\bra{0}+\sum_{i=1}^{d-1}(\ket{0i}+\ket{i0})\bra{i}.
  \]

  \item Lastly, the \emph{identity},
  \[
    \tikzfig{qudit-generators/Id}
    \quad \overset{\interp{\cdot}}{\longmapsto} \quad
    I_d=\sum_{j=0}^{d-1}\ket{j}\bra{j}.
  \]
\end{itemize}

\subsection{Mixed-dimensional generators and their interpretation}

We introduce the generators whose wires have different dimensions, together with their standard interpretation $\interp{\cdot}$, as given in~\cite{wangQufiniteZXcalculusUnified2022}:
\begin{itemize}
  \item The \emph{swap},
  \[
    \tikzfig{qufit-generators/swap}
    \quad \overset{\interp{\cdot}}{\longmapsto} \quad
    \sum_{i=0}^{d-1}\sum_{j=0}^{e-1}\ket{j,i}\bra{i,j}.
  \]

  \item The \emph{dimension splitter},
  \[
    \tikzfig{qufit-generators/dimensionsplitter}
    \quad \overset{\interp{\cdot}}{\longmapsto} \quad
    \sum_{i=0}^{d-1}\sum_{j=0}^{e-1}\ket{i,j}\bra{in+j}.
  \]

\end{itemize}

\subsection{Notations}

\subsubsection{Qudit notations}

For convenience, we introduce the following notation which will be used throughout the paper.
This subsection assumes that each wire is labelled with a fixed positive integer $d$;
thus, for convenience, labels are omitted here.
\begin{itemize}
  \item For some $x \in \mathbb{C}$,
  \begin{gather*}
    \tikzfig{definitions/circlegspiders_2}
    \scol{\qquad \quad} \dcol{\\}
    \tikzfig{definitions/lastditgbox2}
    \scol{\qquad \quad} \dcol{\\}
    \tikzfig{definitions/firstditgbox}
  \end{gather*}
  Additionally, we often label the Z box with $\overrightarrow{1_k}$ where $\overrightarrow{1_k} = \overbrace{(\underbrace{1,\dotsc, 1}_{k - 1},0, \dotsc, 0)}^{d - 1}.$

  \item The qudit version of the Bell state is $\ket{00} + \ket{11} + \ket{22} + \cdots + \ket{(d-1) (d-1)}$.
  In the ZXW calculus, the Bell state and its transpose can be defined as, respectively:
  \begin{gather}
    \tikzfig{definitions/compactstructures_2}
    \dcol{\nonumber \\}\scol{\qquad \qquad}
    \tikzfig{definitions/compactstructures_1}
    \tag{S3}\label{rule:S3}\refstepcounter{equation}
  \end{gather}
  We refer to these diagrams as caps and cups, and structures equipped with them are called compact structures~\cite{coeckeInteractingQuantumObservables2011}.

  \item The $H$ box can be used to define the X-spider. For $j \in \mathbb{Z} / d\mathbb{Z}$,\\
  \begin{minipage}{\scol{.6}\linewidth}
    \begin{equation*}
      \tikzfig{definitions/pinkspiders}
    \end{equation*}
  \end{minipage}
  \begin{minipage}{\scol{.4}\linewidth}
    \[
      \tikzfig{definitions/pinkspiders-phasefree}
      \tag{HZ}\label{rule:HZ}\refstepcounter{equation}
    \]
  \end{minipage}
  where $K_j =\left(e^{i j\frac{2\pi}{d}}, e^{i 2j\frac{2\pi}{d}}, \cdots, e^{i (d-1)j\frac{2\pi}{d}}\right)$.
  Furthermore, $u_{m,n} = d^{\frac{m+n-2}{2}}-1$, hence the green box \tikzfig{definitions/umngbox} represents the scalar $d^{\frac{m+n-2}{2}}$.
  The $K_j$ vector phases are similar to the qubit $0$ and $\pi$ phases in a sense that these correspond to the phases of Pauli operators.
  If $d$ is the dimension, there are exactly $d$ such phases.

  It is useful to note that the interpretation of the X-spider for each $K_j$ phase is given as:
  \[
    \tikzfig{qudit-generators/quditrspiderclassicnm}
    \quad \overset{\interp{\cdot}}{\longmapsto} \hspace{-.5cm}
    \sum_{\substack{
      0 \leq i_1, \cdots, i_m,  j_1, \cdots, j_n \leq d-1 \\
      i_1+\cdots+ i_m+j \equiv j_1+\cdots +j_n \Mod{d}}
    }
    \scol{\hspace{-1.75cm}}\dcol{\hspace{-1.5cm}}
    \ket{i_1, \cdots, i_m}\bra{j_1, \cdots, j_n}.
  \]
  X-spiders with a $K_j$ phases that have a single input or output now correspond to a computational basis costates/states as follows:
  \[
    \tikzfig{qudit-generators/qudit-costate}
    \quad \overset{\interp{\cdot}}{\longmapsto} \quad
    \bra{j}
    \qquad \qquad
    \tikzfig{qudit-generators/qudit-state}
    \quad \overset{\interp{\cdot}}{\longmapsto} \quad
    \ket{d-j}
  \]


  \item We use a yellow $D$ box to denote the \emph{dualiser} as defined in~\cite{coeckeInteractingQuantumObservables2011}:
  \begin{gather}
    \tikzfig{definitions/dualiser}
    \quad \overset{\interp{\cdot}}{\longmapsto} \quad
    \sum_{i = 0}^{d \minu 1} \ket{d-i} \bra{i}.
    \tag{Du}\label{rule:Du}\refstepcounter{equation}
  \end{gather}

  \item The general $W$ node \scalebox{.75}{\tikzfig{definitions/w1ton}} and its transpose \scalebox{.75}{\tikzfig{definitions/wnto1}} are defined as:
  \begin{gather}
    \tikzfig{definitions/mlegsblackspider}
    \tag{WN}\label{rule:WN}\refstepcounter{equation}
  \end{gather}
  with interpretation
  \begin{align*}
    \tikzfig{definitions/w1ton}
    \quad \overset{\interp{\cdot}}{\longmapsto} &\quad
    \ket{0\cdots0}\bra{0} \dcol{\\&}+ \sum_{i=1}^{d - 1}(\ket{i0\cdots 00}+\cdots +\ket{00\cdots 0i})\bra{i}.
  \end{align*}


  \item A multiplier~\cite{bonchiInteractingHopfAlgebras2017, caretteSZXCalculusScalable2019, boothCompleteZXcalculi2022} labelled by $m$ indicates the number of connections between green and pink nodes.
  Unlike in the qubit case, a green and a red spider can be connected with more than one wire.
  In fact, the Hopf law generalizes to $d$ connections (see~\zxw{Lemma 27}) for a red and green spider to disconnect, so $m$ can be labeled modulo $d$:
  \begin{gather}
    \tikzfig{definitions/multiplier}
    \qquad \qquad
    \tikzfig{lemmas/multipliers/multiplier-mod}
    \tag{Mu}\label{rule:Mu}\refstepcounter{equation}
    \scol{\qquad \qquad}\dcol{\\ \nonumber}
    \tikzfig{definitions/multiplier-t}
  \end{gather}
\end{itemize}

\subsubsection{Dimension splitter}

\begin{itemize}
  \item We define the dimension splitter with zero and one legs as follows:
  \begin{gather}
    \tikzfig{qufit-generators/edge-dimsplit}
    \tag{SD}\label{rule:SDT}\refstepcounter{equation}
  \end{gather}

  \item Due to the associativity of the dimension splitter (\autoref{rule:DAS}),
  we can define the multiple-legged dimension splitter  as follows:
  \begin{gather}
    \tikzfig{qufit-generators/mlegsdimsplitspider}
    \tag{SD}\label{rule:SD}\refstepcounter{equation}
    \quad \overset{\interp{\cdot}}{\longmapsto} \quad
    \sum_{i_1=0}^{m_1-1} \cdots \sum_{i_k=0}^{m_k-1} \ket{i_1,\cdots, i_k}\bra{\sum_{j = 1}^{k} i_j \Pi_j },
  \end{gather}
  where $ \Pi_j = \prod_{l = j + 1}^{k} m_l$ for $1 \leq j \leq k - 1,$ and $\Pi_k = 1.$

  \item The \emph{dimension merger} can be defined as the transpose of the dimension splitter:
   \begin{equation}
    \tikzfig{qufit-generators/mlegsdimcombinespider}
  \end{equation}
\end{itemize}

\subsubsection{Mixed-dimensional Z box}
We define the mixed-dimensional Z box that can have legs of varying dimensions.
In the qudit setting, a Z box behaves as the generalized Kronecker delta: it ensures that the same basis state is present on each of its legs.
We preserve this behaviour in the mixed-dimensional case by selecting the $k$-th standard basis on each leg for any $k$ less than the minimal dimension.
As this is not possible for basis states greater than the minimum dimension, we set their coefficients to $0$.
This interpretation is given as follows:
\[
  \tikzfig{qufit-generators/mixed-zbox}
  \quad \overset{\interp{\cdot}}{\longmapsto} \quad
  \sum_{j=0}^{\min{\{d_i\}_i} - 1} a_j
  \ket{j, \cdots, j} \bra{j, \cdots, j},
\]
where $a_0 \coloneqq 1$ and $\overrightarrow{a} = (a_1, \cdots, a_{\min{\{d_i\}_i}-1})$.
Note that when each leg has the same dimension, this interpretation agrees with the qudit case.
We can construct such mixed-dimensional Z-box from the generators of qufinite ZXW calculus as follows:
\[
  \tikzfig{qufit-generators/mixed-zbox}
  \quad \coloneqq \quad
  \tikzfig{qufit-generators/mixed-zbox-decomp}
\]
where $d= \prod_{i} d_i$, and $\overrightarrow{a}' = (a_1, \cdots, a_{\min{\{d_i\}_i}-1}, 0, \cdots, 0)$.
The green circle spider for mixed dimensions can be defined similarly to the qudit case.

\begin{rem}
  All the diagrams can be made from composing the generators in finite times.
  There are two types of compositions for any two diagrams $D_1$ and $D_2$: the parallel composition $D_1 \otimes D_2$ where $D_1$ on the left of $D_2$, and sequential composition  $D_1 \circ D_2$  where the types of the outputs of $D_2$ exactly match the types of the inputs of $D_1$ and $D_2$ is placed above $D_1$.
  Therefore, general diagrams can be interpreted in the category $\FHilb$ as $\interp{D_1 \otimes D_2} =\interp{D_1} \otimes \interp{D_2},  \interp{D_1 \circ D_2} =\interp{D_1} \circ \interp{D_2}$, especially the empty diagram is interpreted as $1$.
\end{rem}

\begingroup 

\allowdisplaybreaks

\subsection{Complete set of rules}

In this section, we show a set of rewrite rules for qufinite ZXW calculus, which we later proved to be complete in \autoref{sec:complete-proof}.
In addition to the rules given below, the generators Z and W also satisfy the certain symmetry relations.
In particular,  the Z spider is flex-symmetric, i.e.\@ we can exchange any legs of the Z spider:
\begin{equation}
  \tikzfig{qufit-generators/z-flexsymmetry}\label{eq:z-flexsymmetry}
\end{equation}
On the other hand, the W node is symmetric, allowing us to permute any of the output legs.
\begin{equation}
  \tikzfig{axioms/wsymetrydit}\label{rule:Sym}
\end{equation}
We split the rule set into qudit-only rules and mixed-dimensional rules.
The qudit rules is given in Figure~\ref{fig:qudit-rules-table}, where each wire is assumed to be of dimension $d$.
The mixed-dimensional rules are given in Figure~\ref{fig:qufit-rules-table}.

\begin{figure}[htb]
  \tikzfig{axioms/qudit-rules-table}
  \caption[Qudit rules]{Qudit part of the rules. Here, $\overrightarrow{a}=(a_{1}, \dotsc, a_{d-1})$, $\overleftarrow{a}=(a_{d-1}, \dotsc, a_1)$, $k_j(\overrightarrow{a})=\left(\frac{a_{1-j}}{a_{d-j}}, \dotsc, \frac{a_{d-1-j}}{a_{d-j}}\right)$, $e \neq 0 \mod d$ and $T_j=(\underbrace{0,\dotsc, 1}_{j}, \dotsc, 0)$ for $0 \leq j < d$.}
  \label{fig:qudit-rules-table}
\end{figure}

\begin{figure}[htb]
  \tikzfig{qufit-axioms/qufit-rules-table}
  \caption{Mixed-dimensional part of the rules. Here, $N = \min\{m, n_1, n_2\}$, $M = \min\left(\{m_t\}_{t=1}^j \cup \{n_t\}_{t=1}^\ell \cup \{r_t\}_{t=1}^s\right)$, $\overrightarrow{ab'}=(a_1 b_1, \dotsc, a_{M-1} b_{M-1}, 0, \dotsc , 0)$, and $e \neq 0 \mod n$.}
  \label{fig:qufit-rules-table}
\end{figure}

Note that we can always check the soundness of the axioms using the interpretation map $\interp{\cdot}$ to verify that the matrices are the same.
Alternatively, we found it convenient to check the soundness by examining all possible input basis states.
From the soundness of the qufinite ZXW calculus, we have the following

\begin{prop}
  \label{momoidalfunctor}
  The standard interpretation $\interp{\cdot}: \ZXWf \to \FHilb$ is a symmetric monoidal functor.
\end{prop}

\endgroup 

\section{Completeness}\label{sec:complete-proof}

The proof strategy for establishing the completeness of the qufinite ZXW calculus is analogous to that used for the qudit ZXW calculus~\cite{poorCompletenessArbitraryFinite2023}.
We first establish the completeness of the qudit fragment (diagrams when all the wires are of the same dimension) of the qufinite ZXW calculus.
Then, we define a unique normal form for qufinite ZXW diagrams.
We describe the strategy of proving completeness.
Finally, we prove that any qufinite ZXW diagram can be transformed into an equivalent normal form.
Completeness follows as a corollary of this, as two diagrams with the same interpretation will yield the same normal form.

\subsection{Qudit completeness}

This section establishes the completeness of the qudit fragment of the calculus, which is advantageous to leverage to streamline several proofs in the mixed-dimensional setting.
Consider two mixed-dimensional diagrams $D_1$ and $D_2$.
If we can find corresponding diagrams $D_1'$ and $D_2'$ such that
\begin{equation}
  \tikzfig{figures/qufit-lemmas/multidim_diags_eq}
  \qquad\Longleftrightarrow\qquad
  \tikzfig{figures/qufit-lemmas/multidim_diags_eq_qudit}
\end{equation}
then the qudit completeness result applied to $D_1' = D_2'$ implies that $D_1 = D_2$ holds as well.
In particular, we can show the following:
\begin{restatable}{prop}{quditImplied}
  For any two qudit diagram $D_1, D_2 \in \ZXWd$, if $\ZXWd \vdash D_1 = D_2$ then $\ZXWf \vdash \iota(D_1) = \iota(D_2)$, where $\iota : \ZXWd \hookrightarrow \ZXWf$ is the inclusion functor.
  In other words, the axioms of the qufinite ZXW calculus, depicted in \autoref{fig:qufit-rules-table} and \autoref{fig:qudit-rules-table}, implies the axioms of the qudit ZXW calculus of Ref.~\cite{poorCompletenessArbitraryFinite2023}.
\end{restatable}

Moreover, we can show that any mixed-dimensional diagram can be embedded into qudits.
Once this embedding is established, we simply need to rewrite the resulting diagrams so that all the wires are qudits of the same dimension.
While the rewriting process is not trivial, it will simplify many proofs required to prove completeness.
\begin{prop}
  \label{lem:post-compose-mixed-z}
  The following two equalities are equivalent:
  \[
    \tikzfig{figures/qufit-lemmas/multidim_diags_eq}
    \qquad\Longleftrightarrow\qquad
    \tikzfig{figures/qufit-lemmas/embedded_multidim_diags_eq}
  \]
  where $d$ is a common multiple of all $d_k$ for $k \in \{1, \cdots, n\}$.
\end{prop}
\begin{proof}
  First, $\Longrightarrow$  follows from post-composing the diagrams with the given mixed-dimensional Z-spiders.
  Then, $\Longleftarrow$ follows from:
  \[
    \tikzfig{figures/qufit-lemmas/embedded_multidim_diags_eq}
    \ \Longrightarrow\ %
    \tikzfig{figures/qufit-lemmas/embedded_split_multidim_diags_eq}
    \ \Longrightarrow\ %
    \tikzfig{figures/qufit-lemmas/multidim_diags_eq}
  \]
  where the last implication is due to \autoref{lem:id-mixed-z-split}.
\end{proof}

\subsection{Normal form}\label{subsec:normal-form}

This subsection describes the normal form of the calculus and proves related theorems.
To construct the normal form, we use the concept of a mixed radix numeral system which is a non-standard positional numeral system in which the numerical base varies from position to position, first formalized by Cantor~\cite{cantorUeberEinfachen1869}.
An overview of the concept relevant to our work can be found in~\cite{knuthPositionalNumber1997}.
\begin{defi}[Mixed radix basis]
  For $K \in \N$, a \emph{finite mixed radix basis} is a sequence of integers $\mathbf{b} = (b_i)_{0 < i < K}$ such that each $b_i \geq 1$.
  We also define the product sequence $\pi^\mathbf{b} = (\pi^\mathbf{b}_i)_{0 \leq i}$ as $\pi^\mathbf{b}_0 \coloneqq 1$ and $\pi^\mathbf{b}_i \coloneqq \prod^i_{j = 1} b_j$ for $i \geq 1$.
\end{defi}
\begin{prop}
  \label{prop:mixed-radix-digit}
  Let $\mathbf{b} = (b_i)_{0 < i < K}$ be a finite mixed radix basis for some $K \in \N$.
  Then, for any integer $0 \leq n < \pi^\mathbf{b}_K$, there exists a unique sequence of non-negative integers $(c_i)_{0 \leq i < K}$ (called digits) such that $c_i < b_{i + 1}$ and
  \[
    n = \sum_{i = 0}^{K - 1} c_i \pi_i^{\mathbb{b}} = c_0 + c_1 b_1 + c_2 b_1 b_2 + \dots + c_{K - 1} b_1 \cdots b_{K - 1}
  \]
\end{prop}

\begin{thm}[Universality]
  The ZXW-calculus is universal.
  More formally, given $\mathbf{m} = (m_i)_{0 \leq i < s}$ a list of integers such that $m_i \geq 1$ and $s \in \N$.
  Let $A \in \bigotimes_{i = 0}^{s - 1} \mathbb{C}^{m_{i}}$ be an arbitrary tensor.
  We can construct a diagram such that the type of the diagram is $\mathbf{m}$ and its interpretation is the above given tensor.
\end{thm}
\begin{proof}
  Note that while $\mathbf{m}$ above is used as the type for a diagram, it can also be used as finite mixed radix basis.
  Let $m \coloneqq \pi^\mathbf{m}_s$ and $\overrightarrow{a} \in \mathbb{C}^{m}$ be the state vector isomorphic to $A$.
  We divide this into two cases based on the value of $m$, when it is $1$, and when it is greater than $1$.

  To construct the diagram for $A$ when $m \geq 1$, we write $\overrightarrow{a}$ as the sum of the computational basis states in $\C^{m}$, $\overrightarrow{a} = \sum_{\ell=0}^{m-1} a_\ell \ket{\ell}$.
  We can then represent each index $\ell$ with $\mathbf{m}$.
  Let $(e_{\ell, i})_{0 \leq i < s}$ be the digits of this mixed radix representation of $\ell$.
  Then, we can represent $\overrightarrow{a}$ as follows:
  \[
    \overrightarrow{a}
    \ =\ %
    \sum_{\ell=0}^{m-1} a_\ell \ket{\ell}
    \ =\ %
    \sum_{\ell=0}^{m-1} a_\ell \ket{e_{\ell, s \minu 1} \,\cdots\, e_{\ell, j} \,\cdots\, e_{\ell, 0}}
  \]

  We now construct the diagram representing $\overrightarrow{a}$.
  First, in dimension $m$, plugging $\ket{m - 1}$ into a $W$-node with $m$ legs gives the state $\ket{m\minu 1, \ 0 \cdots 0} + \ket{0,\ m\minu 1 \cdots 0} + \cdots + \ket{0,\ 0 \cdots m\minu 1}$.
  Then, we construct the diagram that keeps $\ket{0}$ unchaged and maps $\ket{m\minu 1}$ to $a_i \ket{i}$ for each $i$, from where outputs in mixed radix basis is easily acquired:
  \begin{align*}
    \tikzfig{qufit-normal-form/qufitnormalform-map}
    \ \overset{\interp{\cdot}}{\longmapsto} \quad
    a_i \ket{i}\bra{m - 1}\ +\ \ket{0}\bra{0}
    \qquad
    \tikzfig{qufit-normal-form/qufitnormalform-mixing}
    \ \overset{\interp{\cdot}}{\longmapsto} \quad
    &a_i \ket{e_{i, s \minu 1} \,\cdots\, e_{i, j} \,\cdots\, e_{i, 0}}\bra{m - 1}\\[-25pt]
    &+ \ket{0 \,\cdots\, 0 \,\cdots\, 0}\bra{0}\\
  \end{align*}
  Note that the above two diagrams may be equated using \autoref{rule:MS} if we compose the left diagram with the appropriate dimension splitter.
  Finally, putting together the $W$ state, the above gadget for each element in the summand, and summing this state using $X$ spiders, we acquire the diagram representing the tensor $A$:
  \begin{align}
    \label{eq:nf}
    &\tikzfig{qufit-normal-form/qufitnormalform-full}
    \quad \overset{\interp{\cdot}}{\longmapsto} \quad
    \begin{pmatrix}
      a_0     \\
      \vdots  \\
      a_i     \\
      \vdots  \\
      a_{m-1} \\
    \end{pmatrix}
    \scol{\quad}\dcol{\\} \dcol{&\hspace{3cm}}= \quad
    \sum_{\ell=0}^{m-1} a_\ell \ket{e_{\ell, m \minu 1} \,\cdots\, e_{\ell, s} \,\cdots\, e_{\ell, 0}}
    \ = \ %
    \overrightarrow{a}
  \end{align}
  and the diagram has type $\mathbf{m}$.

  When $m = 1$, then it must be the case that $\mathbf{m} = ()$ or that $m_i = 1$ for all $0 \leq i < s$.
  Therefore, all outputs have dimension one and the coefficient is $a_0$.
  This is represented as the following diagram:
  \[
    \tikzfig{qufit-normal-form/qufitnormalform-scalar}
  \]
  where $s \geq 0$.
\end{proof}

\begin{cor}
  \label{universalitylm}
  The interpretation functor $\interp{\cdot}$ is full.
\end{cor}

\begin{defi}[Normal form]
  A ZXW diagram is in normal form if it has the structure illustrated in \autoref{eq:nf}.
\end{defi}

\begin{prop}
  The normal form for a given tensor $A$ and type $\mathbf{m}$ is unique.
\end{prop}
\begin{proof}
  Consider a diagram $D$ in normal form representing tensor $A$ of type $\mathbf{m}$ (equivalent to the vector $\overrightarrow{a}$).
  We show each component of $D$ is uniquely determined by $A$ and $\mathbf{m}$.
  \begin{enumerate}
    \item The initial state preparation layer (using the W-node) and the final summation layer (with X-spiders) are determined solely by the type $\mathbf{m}$.

    \item The multiplier connectivity to the X-spiders ($e_{*,*}$) are uniquely determined by $\mathbf{m}$ and the uniqueness of mixed radix digits (\autoref{prop:mixed-radix-digit}).

    \item The coefficients of the Z-boxes ($a_{*}$) are given by the entries of the tensor.
  \end{enumerate}
  To conclude, every part of the diagram is uniquely determined for a given tensor $A$ and type $
  \mathbf{m}$, and therefore, the normal form is unique.
\end{proof}

If the output dimensions are the same, $m_{s-1}=\cdots = m_i = \cdots = m_0 \eqqcolon d$, then it is equivalent the qudit normal form.
This implies that any diagram in a qudit normal form can be converted into a qufinite normal form.
\begin{restatable}{lem}{quditqufitequivalence}
  \label{lem:qudit-qufit-equivalence}
  If all the output wires of a qufinite normal have the same dimension, then it is equivalent to a qudit normal form of~\cite{poorCompletenessArbitraryFinite2023}.
  \[
    \tikzfig{qufit-normal-form/qufitnormalformtrans}
  \]
\end{restatable}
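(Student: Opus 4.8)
The plan is to prove this by unfolding both normal forms into products of generators and showing that, under the assumption $m_{s-1}=\cdots=m_0=d$, the mixed-dimensional ingredients of the qufinite normal form collapse to their qudit counterparts. Concretely, the qufinite normal form is built from a large Z box carrying the coefficient vector $\overrightarrow a$ on a wire of dimension $m = m_{s-1}\cdots m_0$, together with a multiple-legged dimension splitter \eqref{rule:SD} that redistributes this single $m$-dimensional wire onto the $s$ output wires of dimensions $m_{s-1},\dots,m_0$, possibly with additional X-spiders/W-nodes performing the index bookkeeping dictated by the decomposition $k = \sum_i e_{k,i}\,m_{i-1}\cdots m_0$. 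When every $m_i = d$, the dimension $m = d^{s}$ and the splitter becomes the $d$-ary "all legs dimension $d$" splitter; I would first record that this uniform splitter is exactly the diagram appearing in the qudit normal form of \cite{poorCompletenessArbitraryFinite2023} (indeed, the interpretation $\sum_{i_1,\dots,i_s}\ket{i_1,\dots,i_s}\bra{\sum_j i_j d^{s-j}}$ is precisely the base-$d$ digit-extraction map used there). So step one: make the two diagrams comparable by expressing both sides in terms of the same primitive pieces.

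Second, I would argue the equality by soundness together with uniqueness of normal forms rather than by a direct diagrammatic rewrite, which is the cleanest route given everything the paper has already established. Both diagrams in the claimed equality have the same interpretation in $\FHilb$: the left-hand side is the qufinite normal form of the vector $\overrightarrow a$ with all $m_i=d$, and the right-hand side is the qudit normal form of the same vector (one checks this by evaluating both on the computational basis, using the interpretations of the Z box, the dimension splitter \eqref{rule:SD}, and the X-spiders with $K_j$ phases recorded in \cref{subsec:generators_n_interpretation}). Hence the two diagrams are semantically equal. But completeness — which at this point in the paper we may invoke for the qudit ZXW fragment \cite{poorCompletenessArbitraryFinite2023} and, for the qufinite fragment, we are in the process of proving — would let us conclude the diagrammatic equality; to avoid circularity, however, I would instead give an explicit rewrite: push the uniform dimension splitter through using \eqref{rule:DD} and \eqref{rule:DZX} to rewrite each mixed-dimensional Z/X spider with all legs of dimension $d$ into the corresponding qudit spider, then fuse the resulting qudit Z boxes with \eqref{rule:S1} (or its qudit instance) and eliminate redundant splitter/merger pairs via the associativity rule \eqref{rule:DAS}. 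Each such step is a local application of a rule from the complete set, and tracking the coefficient vector through the fusion shows it remains $\overrightarrow a$.

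The main obstacle I expect is the index bookkeeping: verifying that the additional X-spiders and W-nodes in the qufinite normal form, which encode the positional weights $m_{i-1}\cdots m_0 = d^{i}$ in the base-$d$ expansion of $k$, line up exactly with the structure of the qudit normal form after the splitters are removed. This is a finite but somewhat delicate combinatorial check — essentially confirming that "split an $m$-wire into $s$ copies of a $d$-wire, then reindex" agrees leg-by-leg with "directly produce $s$ $d$-wires" — and it is where one has to be careful that no scalar factors (the $u_{m,n}$ normalisations hidden in the X-spider definition \eqref{rule:HZ}) are dropped. Once that correspondence is pinned down, the remaining rewrites are routine fusions, and the lemma follows.
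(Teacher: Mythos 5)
Your overall instinct --- prove this by an explicit rewrite that collapses the mixed-dimensional ingredients to their qudit counterparts once all output dimensions equal $d$, rather than by appealing to qufinite completeness (which would be circular) --- is the same as the paper's. However, there is a genuine gap: you never actually carry out the rewrite. The step you yourself identify as the crux, namely checking that the base-$d$ index bookkeeping of the qufinite form lines up leg-by-leg with the qudit normal form after the splitters are removed, is exactly the content of the lemma, and your proposal defers it as ``a finite but somewhat delicate combinatorial check'' without supplying it. Compounding this, your starting description of the qufinite normal form (one large Z box on an $m$-dimensional wire feeding a multi-legged dimension splitter) does not match the normal form the paper actually defines, which --- as is visible from the statement of \cref{lem:partialtrace} --- consists of one Z box $\overline{a_k}$ per coefficient, joined by a W node and attached to a per-output X-spider through multipliers weighted by the digits $e_{k,i}$. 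A rewrite plan keyed to the wrong diagram cannot be checked for correctness. Finally, you attribute the cancellation of redundant splitter/merger pairs to the associativity lemma \cref{rule:DAS}; that cancellation is the unitarity lemma \eqref{rule:UT1}, and this confusion suggests the intended rule applications have not been traced through.

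For comparison, the paper's proof is a short concrete derivation using three specific facts: \cref{k1dimchange} to change the dimension carried by the $K_1$-labelled components of the normal form, \eqref{rule:DZS} to move the dimension splitter through the Z/X-spider structure, and \eqref{rule:UT1} to cancel the resulting splitter--merger pairs. Your proposed toolkit (\eqref{rule:DD}, \eqref{rule:DZX}, \eqref{rule:S1}, \cref{rule:DAS}) overlaps in spirit but is not shown to terminate in the qudit normal form; to repair the argument you would need to start from the correct qufinite normal form and exhibit the rewrite sequence explicitly, at which point the three lemmas above are the natural ones to reach for.
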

The result follows from \autoref{k1dimchange} from the appendix along with~\eqref{rule:DZS} and~\eqref{rule:UT1}.
For a formal proof, see the appendix.

We now proceed to prove the completeness of the qufinite ZXW calculus.

\subsection{Proof strategy}
\newcommand{\proofstrategyscale}{0.8}

In this section, we describe the strategy for proving completeness.
The method is identical to the proof strategy of qudit ZXW calculus completeness~\cite{poorCompletenessArbitraryFinite2023}, but we recall it here to keep the paper self-contained.

To prove completeness, we need to show that for any two diagrams $A$ and $B$ with the same types, if $\interp{A} = \interp{B}$, then we can derive that $A = B$ from the rules of the calculus.
Due to the following map-state duality:
\begin{equation*}
  \scalebox{\proofstrategyscale}{
      \vspace{-2cm}
    \tikzfig{proof-idea/mapstatedual3}
  }
  \label{eq:maptostate}
\end{equation*}
we only need to consider state diagrams (i.e.\@ diagrams without any input).

Now assume that we have two state diagrams $A$ and $B$ with the same types such that $\interp{A} = \interp{B}$.
We need to show that $A = B$.
Since $\interp{A} = \interp{B}$,  $A$ and $B$ must have the same normal form $D$.
Therefore, if we can rewrite both $A$ and $B$ into $D$, i.e.\@ $A = D$ and $B = D$, then we obtain that $A = B$.
Thus, the proof strategy for completeness is to show that any state diagram can be rewritten into the unique normal form.

To rewrite an arbitrary state diagram into a normal form, we need to analyse its structure.
We first note that each state diagram $D_s$ has the following form:
\begin{equation*}
  \scalebox{\proofstrategyscale}{
    \tikzfig{proof-idea/generalstatedm}
  }
  \label{eq:generalstatediam}
\end{equation*}
where $A_1,\, A_2,\, \cdots,\, A_n$ are diagrams that are the parallel compositions of generators given in Section~\ref{sec:qufitzxw}.
We can rewrite $D_s$ into its normal form using the following steps:
\begin{enumerate}
  \item We rewrite $A_1$, which is the tensor product of generators without inputs, into its normal form $N_1$:
  \[
    \scalebox{\proofstrategyscale}{
      \tikzfig{proof-idea/a1_to_n1}
    }
  \]
  \item We bend the top of the diagram so that the generators in $A_2$ become top-bended state diagrams:
  \[
    \scalebox{\proofstrategyscale}{
      \tikzfig{proof-idea/sequentialbend}
    }
  \]
  \item We convert the top-bent $A_2$ diagram into its corresponding normal form $N_2$:
  \[
    \scalebox{\proofstrategyscale}{
      \vspace{-2cm}
      \tikzfig{proof-idea/a2_to_n2}
    }
  \]
  \item We rewrite the tensor product of the two normal forms into a single normal form denoted with $N_{1,2}$:
  \[
    \scalebox{\proofstrategyscale}{
      \tikzfig{proof-idea/n1_otimes_n2}
    }
  \]
  \item We reduce the partial traces of the normal form (i.e.\@ connections of two outputs of a normal form with a cup) into another normal form $N_{1,2}^\prime$:
  \[
    \scalebox{\proofstrategyscale}{
      \tikzfig{proof-idea/n12_partial_trace}
    }
  \]
  \item We repeat from step 2 for the rest of the diagram, that is from $A_3$ to $A_n$.
\end{enumerate}
If we follow the above steps, we obtain the normal form of $D_s$.
In summary, we need to prove the following to derive the completeness of the ZXW-calculus:
\begin{itemize}
  \item All generators bent in state diagrams or already being state diagrams can be rewritten into their normal forms.
  \item The tensor product of any two normal forms can be rewritten into a single normal form.
  \item A partial-traced normal form can be rewritten into a normal form.
\end{itemize}

\subsection{Proof of completeness}

\begin{restatable}[Completeness]{thm}{zxwcompleteness}
  \label{completetm}
  For finite-dimensional Hilbert spaces, the qufinite ZXW calculus is universally complete:
  For any two qufinite ZXW diagrams of the same type $D_1: A \to B$ and $D_2: A \to B$, if $\interp{D_1} = \interp{D_2}$, then $\ZXWf \vdash D_1 = D_2$.
\end{restatable}
\begin{proof}
  We divide this into two cases based on the values of $A$ and $B$:
  \begin{enumerate}
    \item $(\prod A_i) \cdot (\prod B_j) > 1$, that is, the diagrams are not scalars (\autoref{nonscalarcompletetm});
    \item $\prod A_i = 1$ and $\prod B_i = 1$, that is, the diagrams are scalars (\autoref{scalar-completetm}).
  \end{enumerate}
\end{proof}

\begin{cor}
  \label{catequivallence}
  The category $\ZXWf$ is monoidally equivalent to the category $\FHilb$.
\end{cor}
\begin{proof}
  Two categories are monoidally equivalent if there is a monoidal functor between them and the functor is full, faithful and essentially surjective on objects~\cite{heunenCategoriesQuantumTheory2019}.
  The interpretation functor $\interp{\cdot}$ is a monoidal functor by \autoref{momoidalfunctor}.
  It is full and faithful by \autoref{universalitylm} and \autoref{completetm} respectively.
  For any object $H \in \FHilb$, we have an object $[\dim(H)] \in \ZXWf$ such that $H \cong \mathbb{C}^{\dim(H)} = \interp{[\dim(H)]}$;
  hence, $\interp{\cdot}$ is essentially surjective on objects.
\end{proof}

\begin{restatable}[Non-Scalar Completeness]{prop}{zxwnonscalarcompleteness}
  \label{nonscalarcompletetm}
  For non-scalar finite-dimensional Hilbert spaces, the qufinite ZXW calculus is universally complete:
  For any two qufinite ZXW diagrams of the same type $D_1: A \to B$ and $D_2: A \to B$ where $(\prod A_i) \cdot (\prod B_j) > 1$, if $\interp{D_1} = \interp{D_2}$, then $\ZXWf \vdash D_1 = D_2$.
\end{restatable}
\begin{proof}
  To prove the above theorem, we first show that all the generators (the Z box, W node, Hadamard box, and the dimension splitter) can be rewritten into a normal form.
  Then, we show that any partial trace of a normal form can be transformed into another normal form.
  Finally, we rewrite the tensor product of two normal forms into a single normal form.
  The above statements are formalized in \autoref{lem:zbox}, \autoref{lem:wnode}, \autoref{lem:hadamard}, \autoref{lem:dimsplitter}, \autoref{lem:partialtrace} and \autoref{lem:tensor}.
  Note that the proofs of \autoref{lem:dimsplitter}, \autoref{lem:partialtrace} and \autoref{lem:tensor} are the content of the appendix.
\end{proof}

\begin{restatable}[Scalar Completeness]{prop}{zxwscalarcompleteness}
  \label{scalar-completetm}
  For scalars, the qufinite ZXW calculus is universally complete:
  For any two qufinite ZXW diagrams of the same type $D_1: A \to B$ and $D_2: A \to B$ where $\prod A_i = 1$ and $\prod B_i = 1$, if $\interp{D_1} = \interp{D_2}$, then $\ZXWf \vdash D_1 = D_2$.
\end{restatable}
\begin{proof}
  \[
    \tikzfig{dim-1-proof}
  \]
  \[
    \tikzfig{dim-1-proof-2}
  \]
  \qedhere
\end{proof}

\begin{restatable}[Z box]{lem}{zboxlm}
  When $d > 1$, then the following holds:
  \label{lem:zbox}
  \[
    \tikzfig{qufit-generators/zboxshort}
  \]
\end{restatable}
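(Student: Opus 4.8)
The plan is to show that the qudit Z box — the single generator with $m$ outputs, $n$ inputs and parameter vector $\overrightarrow{a} = (a_1,\dotsc,a_{d-1})$ — can be rewritten into the qufinite normal form of \cref{subsec:normal-form}. Since all legs of this generator carry the same dimension $d$, the target is, by \cref{lem:qudit-qufit-equivalence}, the same thing as the \emph{qudit} normal form of~\cite{poorCompletenessArbitraryFinite2023}. Therefore the cleanest route is: first rewrite the qufinite-flavoured Z box into the qudit Z box living entirely in dimension $d$ (there is nothing to do here — it literally \emph{is} that generator, with every wire labelled $d$), then invoke the corresponding lemma from~\cite{poorCompletenessArbitraryFinite2023} stating that the qudit Z box equals the qudit normal form, and finally apply \cref{lem:qudit-qufit-equivalence} in reverse to repackage the qudit normal form as a qufinite normal form. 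In other words, the lemma reduces to the already-established qudit case plus the translation lemma.

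Concretely, I would carry out the steps in this order. (1) Use map–state duality (the caps/cups of~\eqref{rule:S3}) to bend all $n$ inputs to the top, turning the Z box into a state on $m+n$ legs whose interpretation is $\sum_{j=0}^{d-1} a_j \ket{j}^{\otimes(m+n)}$; this is the form that the normal-form construction handles directly. (2) Recognize this state as a GHZ-like state decorated by the phase/coefficient data $\overrightarrow{a}$, and appeal to the qudit completeness machinery: in~\cite{poorCompletenessArbitraryFinite2023} the corresponding statement (a single Z box rewrites to the qudit normal form) is exactly one of the base cases of their completeness proof, so I can cite it. (3) Convert the resulting qudit normal form into a qufinite normal form via \cref{lem:qudit-qufit-equivalence}, which supplies the dimension-splitter/merger scaffolding (using~\eqref{rule:DZS} and~\eqref{rule:UT1} as noted there) to present the $d^{m+n}$-dimensional underlying vector in the mixed-dimensional normal-form layout. (4) Bend the bottom $n$ legs back down with cups to recover the original input/output signature, noting that bending wires commutes with the normal-form structure because caps/cups are themselves expressible via the normal form and the rewrite rules respect the compact structure.

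I expect the main obstacle to be bookkeeping rather than conceptual: making sure that the coefficient vector $\overrightarrow{a} = (a_1,\dotsc,a_{d-1})$ with the convention $a_0 = 1$ is correctly placed into the normal-form vector $(a_0, a_1, \dotsc, a_{m-1})^T$ — here only the entries indexed by multiples of the "all-equal-digits" numbers $j(1 + d + \cdots + d^{m+n-1})$ are nonzero, all others being $0$ — and that the indexing conventions (indices taken mod $d$, the digit expansion $k = \sum_i e_{k,i} d^i$) line up between the two papers. A secondary subtlety is the degenerate cases: $m = 0$ or $n = 0$ (a Z box that is a pure state or costate) and $m + n = 1$ (which, together with $a_0 = 1$, forces the diagram to be a specific computational-basis-diagonal map); these should be checked separately to confirm the normal form specializes correctly. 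Once the indexing is pinned down, the rewrite itself is just "cite the qudit case, then apply \cref{lem:qudit-qufit-equivalence}," so the proof in the appendix should be short.
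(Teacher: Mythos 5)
Your proposal matches the paper's proof: the paper simply derives \cref{lem:zbox} by citing the corresponding Z-box normal-form lemma of the qudit ZXW calculus~\cite{poorCompletenessArbitraryFinite2023} and then applying \cref{lem:qudit-qufit-equivalence} to repackage the qudit normal form as a qufinite one. The extra machinery you describe (map--state duality, bending wires, index bookkeeping) is already absorbed into those two cited results, so the argument is exactly the two-step reduction you identified.
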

\begin{proof}
  This follows from \zxw{Lemma 2} and \autoref{lem:qudit-qufit-equivalence}.
\end{proof}

\begin{restatable}[W node]{lem}{wnodelm}
  \label{lem:wnode}
  When $d > 1$, then the following holds:
  \begin{align*}
    &\tikzfig{qufit-generators/wshortqufit-lhs}
    \scol{\quad=\quad}\dcol{\\&=\quad}
    \tikzfig{qufit-generators/wshortqufit-rhs}
  \end{align*}
\end{restatable}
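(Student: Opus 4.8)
The plan is to reduce the statement to the corresponding result of the qudit ZXW calculus, exactly along the lines of the proof of \cref{lem:zbox}. Every leg of the W node carries the same dimension $d$, so the W node is literally one of the qudit generators of~\cite{poorCompletenessArbitraryFinite2023} sitting inside $\ZXW$; consequently its qufinite normal form is a normal form all of whose output wires have dimension $d$, and \cref{lem:qudit-qufit-equivalence} tells us that such a qufinite normal form is interderivable with the matching qudit normal form. It therefore suffices to invoke the qudit lemma which already rewrites the W node into its qudit normal form in~\cite{poorCompletenessArbitraryFinite2023}, and then transport along \cref{lem:qudit-qufit-equivalence} to the qufinite normal form drawn on the right-hand side.

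In slightly more detail, I would first use the compact structures of~\eqref{rule:S3} to bend the single input of the W node into an output, so that by map--state duality the diagram becomes the tripartite state $\ket{000} + \sum_{i=1}^{d-1}\left(\ket{0ii} + \ket{i0i}\right)$. Applying the qudit W-node normal-form lemma of~\cite{poorCompletenessArbitraryFinite2023} --- together with only the qudit rules inherited from that paper, the cap/cup manipulations of~\eqref{rule:S3}, the dualiser~\eqref{rule:Du}, and swaps to fix leg orderings --- rewrites this state into a qudit normal form. Then \cref{lem:qudit-qufit-equivalence} converts it into the qufinite normal form, and bending the leg back with~\eqref{rule:S3} restores the $1 \to 2$ shape of the right-hand side. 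No genuinely mixed-dimensional rewriting is needed, since the dimension is constant on every wire throughout.

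I expect the only real difficulty to be bookkeeping: one has to check that the ordering of legs and the orientation conventions used for the normal form here agree with those of~\cite{poorCompletenessArbitraryFinite2023}, inserting swaps and dualisers wherever they differ, and verify that the data of the qufinite normal form on the right-hand side are precisely the images under \cref{lem:qudit-qufit-equivalence} of the state $\ket{000} + \sum_{i=1}^{d-1}(\ket{0ii} + \ket{i0i})$. This is routine rather than conceptual, and the argument carries no obstacle of the kind one meets for the genuinely mixed-dimensional generators such as the dimension splitter.
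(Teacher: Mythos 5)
Your proposal matches the paper's proof: the paper disposes of this lemma in one line by invoking the qudit W-node normal-form lemma of~\cite{poorCompletenessArbitraryFinite2023} together with \cref{lem:qudit-qufit-equivalence}, exactly as you do. The extra detail you supply about map--state duality and leg bookkeeping is consistent with that route and does not constitute a different argument.
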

\begin{proof}
  Follows from \zxw{Lemma 3} and \autoref{lem:qudit-qufit-equivalence}.
\end{proof}

\begin{restatable}[Hadamard box]{lem}{hadlm}
  When $d > 1$, then the following holds:
  \label{lem:hadamard}
  \[
    \dcol{\rotatebox[origin=c]{90}{\tikzfig{qufit-generators/hadamardqfit}}}
    \scol{\tikzfig{qufit-generators/hadamardqfit}}
  \]
\end{restatable}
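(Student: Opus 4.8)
The plan is to reuse the two-step strategy that settled \cref{lem:zbox,lem:wnode}. The Hadamard box is a purely qudit generator: all of its wires carry one and the same dimension $d$. Hence it suffices to (i) rewrite it into the qudit normal form of~\cite{poorCompletenessArbitraryFinite2023}, and then (ii) transport that qudit normal form to the qufinite normal form using \cref{lem:qudit-qufit-equivalence}, whose right-hand side is exactly the diagram claimed here.

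For step (i) I would invoke the qudit counterpart of the Hadamard-box normal-form lemma from~\cite{poorCompletenessArbitraryFinite2023}, the analogue of \zxw{Lemma 2} and \zxw{Lemma 3} used above. Concretely this amounts to unfolding the Hadamard decomposition~\eqref{rule:HD} into Z-spiders, X-spiders, multipliers and scalar boxes, and then collapsing the result with spider fusion~\eqref{rule:S1}, the multiplier rules~\eqref{rule:Mu}, and the $K$-copy rules~\eqref{rule:K1},~\eqref{rule:K2}, so that the linear map $\frac{1}{\sqrt d}\sum_{j,k}\omega^{jk}\ket{j}\bra{k}$ is presented as a canonical single-input single-output qudit normal form whose coefficient vector is read off from a row of the Fourier matrix. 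Since the qudit ZXW calculus is complete, such a rewriting necessarily exists; the cited lemma supplies it explicitly, so no new diagrammatic work is needed here.

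For step (ii), \cref{lem:qudit-qufit-equivalence} converts a qudit normal form whose single output wire has dimension $d$ into the corresponding qufinite normal form, which is precisely the right-hand side of the statement.

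I expect the only genuine difficulty to be the bookkeeping of step (i): pinning down the exact coefficient vector produced by the $d$-th roots of unity and checking that the qudit Hadamard lemma of~\cite{poorCompletenessArbitraryFinite2023} is instantiated correctly. Once that identification is made, the passage to the qufinite normal form is immediate from \cref{lem:qudit-qufit-equivalence}.
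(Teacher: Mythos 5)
Your proposal matches the paper's proof: the paper simply notes that the result follows from the qudit Hadamard-box normal-form lemma of~\cite{poorCompletenessArbitraryFinite2023} (their Lemma~4, the analogue of the lemmas cited for the Z box and W node) together with \cref{lem:qudit-qufit-equivalence}. Your additional sketch of how the qudit lemma itself is established is not needed, since that lemma is imported wholesale, but it does not detract from the argument.
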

\Description{Lemma that the Hadamard gate equals its normal form}
\begin{proof}
  Similarly, this follows from \zxw{Lemma 4} and \autoref{lem:qudit-qufit-equivalence}.
\end{proof}

As the proofs of the following three lemmas are nontrivial, they can be found in the appendix.
Notably, the proofs of \autoref{lem:partialtrace} and \autoref{lem:tensor} rely on the same intuition as their equivalent in~\cite{poorCompletenessArbitraryFinite2023}.

\begin{restatable}[Dimension splitter]{lem}{dimsplitterlm}
  \label{lem:dimsplitter}
    When $d > 1$, then the following holds:
  \[
    \tikzfig{qufit-generators/dimspliterv2}
  \]
  where $d=mn, k=s_kn+t_k, 0 \leq k \leq d-1$.
\end{restatable}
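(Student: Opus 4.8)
The plan is to establish the equality by converting the dimension splitter into a form that manifestly matches the normal form of its underlying tensor, namely the map $\ket{in+j} \mapsto \ket{i,j}$ viewed as a $d \times d$ matrix ($d = mn$). Since the normal form is defined via the interpretation $\sum_{k} a_k \ket{e_{k,\bullet}}$ and here the relevant tensor is a permutation-like isometry (it is a genuine linear map from $\C^d$ to $\C^m \otimes \C^n$), by map-state duality it suffices to work with the state obtained by bending the input leg down with a cap. So first I would use the compact structure (\ref{rule:S3}) to reshape the splitter into a tripartite state on wires of dimensions $m$, $n$, and $d$, whose normal form we already know how to write.

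The key technical step is rule~(\ref{rule:DD}), which by construction expresses the dimension splitter in terms of mixed-dimensional Z-spiders and X-spiders: it says essentially that the splitter acts by $\ket{k}_d \mapsto \ket{s_k}_m \otimes \ket{t_k}_n$ where $k = s_k n + t_k$. I would start from the right-hand side (the claimed normal form), unfold the definition of the normal-form diagram for this particular tensor — which amounts to a collection of X-spider ``decoder'' states indexed by the digit decomposition $k = s_k n + t_k$, a W-node fan-out, and the appropriate $V$/$M$ boxes — and push it through (\ref{rule:DD}) together with the K-copy rules (\ref{rule:K0}), (\ref{rule:K1}), (\ref{rule:KZ}) governing how $K_j$-phased X-spiders copy through W-nodes and Z-spiders. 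The digit-decomposition bookkeeping $k = s_k n + t_k$ with $0 \le s_k \le m-1$, $0 \le t_k \le n-1$ is exactly the $s=2$, $(m_1, m_0) = (m,n)$ instance of the general normal-form index decomposition, so the combinatorial matching should be clean once the reshaping is in place.

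The main obstacle I expect is the careful handling of the mixed-dimensional spiders and the dimension-change rules (\ref{rule:DZX}), (\ref{rule:DZS}) when the three wires have genuinely different dimensions $m$, $n$, $mn$: unlike the qudit case one cannot simply appeal to \cref{lem:qudit-qufit-equivalence}, so the phase vectors $K_j$ and the scalars $u_{m,n}$ must be tracked across dimensions, and one must verify that the $\min$-truncation in the mixed-dimensional Z-box interpretation does not lose any coefficients here (it does not, since all coefficients of this tensor are $0$ or $1$ and supported on the diagonal-compatible part). As a sanity check throughout, I would verify soundness by evaluating both sides on the basis costate $\bra{k}$ and confirming both give $\ket{s_k, t_k}$, which pins down that no spurious scalar has been introduced. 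Given the length and the delegation to the appendix, I would present the reshaping and the application of (\ref{rule:DD}) in detail and relegate the spider-pushing bookkeeping to a sequence of displayed rewrites.
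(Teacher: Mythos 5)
Your plan matches the paper's proof in its essentials: the derivation there is a short rewrite chain that uses \eqref{rule:DD} to turn the splitter into mixed-dimensional Z- and X-spiders and then expands into the $d=mn$ branches indexed by $k=s_kn+t_k$, which is exactly the $s=2$ instance of the normal-form index decomposition you identify (the direction of rewriting is reversed, but that is immaterial). One refinement worth noting: the ``main obstacle'' you anticipate --- tracking $K_j$ phases and scalars across wires of genuinely different dimensions --- is handled in the paper's supporting lemmas not by hand but via \cref{lem:post-compose-mixed-z}, which embeds a mixed-dimensional equation into a single common dimension and then discharges the resulting qudit equation by the completeness of the qudit ZXW calculus~\cite{poorCompletenessArbitraryFinite2023} (see, e.g., \cref{dsplit-zspider,rule:DZC}). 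Adopting that technique collapses most of the spider-pushing bookkeeping you propose to relegate to displayed rewrites into a one-line semantic check, and your basis-costate evaluation on $\bra{k}$ then does real proof work rather than serving only as a sanity test.
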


\begin{restatable}[Partial trace]{lem}{partialtracelm}
  \label{lem:partialtrace}
  For $\mathbf{m} = (m_0, \cdots, m_{s - 1})$ where $\prod_{i = 0}^{s - 1} > 1$, the following holds:
  \begin{align*}
    &\tikzfig{figures/qufit-trace/trace-lhs}
    \scol{\quad=\quad}\dcol{\\&\hspace{2cm}=\quad}
    \tikzfig{figures/qufit-trace/trace-rhs}
  \end{align*}
  where $m_i = m_j$ and $\Sigma_k$ corresponds to the elements of the partial trace over $s$ and $t$ indices.
  That is, $\Sigma_k$ is the sum of the boxes $a_{k_0}, \cdots ,a_{k_{d \minu 1}}$ with such multiplier connections that satisfy $e_{x,k_0} = \cdots = e_{x,k_{d \minu 1}}$ for all $x \in \{0,\, \ldots,\, d - 1\} \setminus \{s, t\}$ and $e_{s,k_{y}} = e_{t,k_y} = y$ for $0 \leq y \leq d - 1$.
\end{restatable}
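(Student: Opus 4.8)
\textbf{Proof proposal for \cref{lem:partialtrace} (Partial trace).}
The plan is to compute both sides under the interpretation to see exactly which scalar boxes should be grouped, and then to realize that grouping purely diagrammatically by manipulating the multiplier connections feeding into the traced-out legs. First I would set up notation: the normal form on the left is a Z box (carrying the vector of scalars $a_0,\dots,a_{d-1}$) fanned out through W nodes and multipliers onto the $d$ output wires, two of which --- say the $s$-th and $t$-th, which have equal dimension $m_i = m_j$ --- are then connected by a cup. Taking the partial trace means precomposing the output basis $\bra{e_{k,\bullet}}$ with the constraint $e_{k,s} = e_{k,t}$, so under $\interp{\cdot}$ the left-hand side sends $\ket{e_{k,0}\cdots e_{k,d-1}}$ (with the $s,t$ entries now both equal to some common value $y$) to $\sum_{y} a_{k_y}$, where $k_y$ ranges over the original indices whose $s,t$ digits are both $y$ and whose remaining digits match $e_{k,\bullet}$. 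That is precisely the quantity $\Sigma_k$ defined in the statement, so the two sides agree semantically; the work is to get there by rewriting.

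The diagrammatic strategy I would follow mirrors the one in~\cite{poorCompletenessArbitraryFinite2023} for the qudit partial-trace lemma, which the excerpt explicitly says this proof shares intuition with. The key move is to slide the cup connecting the $s$ and $t$ wires back through the multipliers and W nodes toward the central Z box. Bending one of the traced legs using the compact structure (\cref{rule:S3}) turns the cup into an identification of two inputs of the W-fan; then the W-node rules (\eqref{rule:WN}, \eqref{rule:WW}, \eqref{rule:Sym}, \eqref{rule:Aso}) together with the multiplier rules (\eqref{rule:Mu}) and the bialgebra-type rules (\eqref{rule:BZW}, \eqref{rule:TA}) let me absorb the trace into a reindexing of how the Z box distributes over the surviving legs. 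Concretely, after pushing the cup through, the Z box gets post-composed with a map that collapses the $s$ and $t$ coordinates onto a diagonal, which by the addition/copy behaviour of the W node (\eqref{rule:AD}) and the spider-fusion rule (\eqref{rule:S1}) is exactly the operation of summing the appropriate subsets of the $a$-vector entries --- yielding the new scalar boxes $\Sigma_k$. I would then re-fan the resulting Z box through fresh W nodes and multipliers to land in the normal form on the right.

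I expect the main obstacle to be the bookkeeping of the multiplier labels: in the normal form each scalar box $a_k$ is attached to the output wires via a specific pattern of multipliers encoding the mixed-radix digits $e_{k,i}$, and after the trace the digits on the $s$ and $t$ wires must be forced equal while all others are preserved, so I need to verify that pushing the cup through produces exactly the multiplier pattern that picks out $\{k : e_{k,s} = e_{k,t} = y\}$ and no spurious terms. This is where the hypothesis $m_i = m_j$ is essential --- without equal dimensions the cup is not available --- and where the mixed-dimensional rules \eqref{rule:KZ} and \eqref{rule:K0} (which govern how X-spider basis states commute past the fan) do the heavy lifting. Once the index arithmetic is checked, the remaining steps are routine applications of the Z-box normal-form lemma (\cref{lem:zbox}) to re-normalize the grouped scalars, and an appeal to \cref{lem:qudit-qufit-equivalence} where a uniform-dimension subdiagram appears. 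I would organize the appendix proof as a sequence of figures: (i) bend the traced leg, (ii) push the cup through the W-fan and multipliers, (iii) fuse into a single Z box carrying the $\Sigma_k$, (iv) re-expand into normal form.
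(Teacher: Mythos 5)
Your proposal is correct and takes essentially the same route as the paper: both push the trace through the fan-out of W nodes and multipliers, partition the scalar boxes according to whether the $s$ and $t$ digits agree, sum the agreeing ones into the $\Sigma_k$ boxes, and eliminate the rest. The verification you defer as ``the main obstacle'' --- that the mismatched terms contribute no spurious contributions --- is precisely what the paper discharges with its explicit Group~1/Group~2 case split together with \cref{lem:tracelm-1,lem:tracelm-2}.
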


\begin{restatable}[Tensor product]{lem}{tensorlm}
  \label{lem:tensor}
  For $\mathbf{m} = (m_0, \cdots, m_{s - 1})$ where $\prod_{i = 0}^{s - 1} m_i > 1$ and $\mathbf{n} = (n_0, \cdots, n_{t - 1})$ where $\prod_{i = 0}^{t - 1} n_i > 1$, the following holds:
  \begin{align*}
    &\tikzfig{figures/qufit-tensor/tensor-lhs}
    \scol{\quad=\quad}\dcol{\\&\hspace{2cm}=\quad}
    \tikzfig{figures/qufit-tensor/tensor-rhs}
  \end{align*}
\end{restatable}


\section{Applications of qufinite ZXW calculus}\label{sec:apply}

By developing a graphical language for finite-dimensional quantum theory, we can begin to apply the power of diagrammatic reasoning to a wide range of problems in quantum physics and quantum information theory.
In this section, we discuss several potential applications of the qufinite ZXW calculus.

\subsection{Quantum chemistry}

\subsubsection{Spin networks}

The concept of spin networks was described by Roger Penrose as a combinatorial approach to space-time~\cite{penroseAngularMomentumApproach1971}.
It has found applications in quantum chemistry, specifically for the angular momentum coupling problem~\cite{brinkAngularMomentum1994}, and in the theory of quantum gravity~\cite{rovelliSpinNetworks1995}.
The key part of the spin network theory is the irreducible representations of $SU(2)$, which can be directly constructed from the symmetrizer~\cite{martin-dussaudPrimerGroupTheory2019}:
\[
\mathcal{S}_n: (\mathbb{C}^2)^{\otimes n} \longrightarrow  (\mathbb{C}^2)^{\otimes n}, v_1\otimes \cdots \otimes v_n \mapsto \frac{1}{n!}\Sigma_{\sigma \in \mathfrak{S}_n}v_{\sigma_1}\otimes \cdots \otimes v_{\sigma_n},
\]
where $\mathfrak{S}_n$ is the $n$-element permutation group.

East et al.~\cite{eastAKLTStatesZXDiagrams2022, eastSpinnetworksZXcalculus2022} showed that the qubit ZX~calculus can be used to represent and perform calculations on the spin networks.
Since they use a qubit calculus, reasoning about higher dimensional spin representations and couplings between different spin systems requires an encoding in multiple qubits.
On the other hand, the qufinite ZXW calculus offers a platform to directly reason about the coupling of spin systems of arbitrary dimensions.
For example, the following diagram represents the symmetrizer of spin-$\frac{n}{2}$ for any $n\in \mathbb{N}$: 
\[
    \tikzfig{figures/qufit-applications/spinjiirepsnwire}\quad ,
\]
\begin{flalign*}
    \text{where }\,
    \overrightarrow{a} = \left(
        \frac{1}{\binom{n}{1}}, \cdots, \frac{1}{\binom{n}{k}},  \cdots, \frac{1}{\binom{n}{n}}
        \right).&&
\end{flalign*}
Hence, we expect that the qufinite ZXW calculus would serve as a valuable tool for reasoning about spin networks.

\begin{rem}
  Since this paper first appeared online, this particular idea has been explored and used for providing a diagrammatic langauge for spin SU(2)~\cite{wangPenroseTensor2025} and loop quantum gravity~\cite{priestleyFiniteDimensionalZXCalculus2025}.
\end{rem}

\subsubsection{Interacting mixed dimensional quantum systems}

In quantum chemistry, we often observe interactions between different types of particles, such as the interaction between electrons and photons.
Different molecules have varying degrees of freedom and are modeled by different Hilbert spaces.
We can extend the qudit ZXW framework for reasoning about Hamiltonians in quantum chemistry~\cite{shaikhHowSum2023,defeliceLightMatterInteractionZXW2023} to mixed-dimensional quantum systems using the qufinite ZXW calculus.
This allows us to reason about interacting mixed dimensional systems in a diagrammatic way.
A natural example of this is the Hamiltonian of the Jaynes-Cummings model, which describes the interaction between a two-level atom and a photon, that can be represented as follows:
\[
    \tikzfig{applications/JaynesCummings}
\]
where the white triangle can be represented as a ZXW diagram (see~\cite{defeliceLightMatterInteractionZXW2023}) with the following interpretation:\tikzstyle{lsplit}=[shape=isosceles triangle, isosceles triangle stretches=true, fill=white, draw=black, minimum width=0.4cm, minimum height=3mm, inner sep=1pt, shape border rotate=180]
\tikzstyle{lsplit}=[shape=isosceles triangle, isosceles triangle stretches=true, fill=white, draw=black, minimum width=0.4cm, minimum height=3mm, inner sep=1pt, shape border rotate=180]
\[
  \tikzfig{applications/split}
  \quad \xmapsto{\interp{\cdot}} \quad
  \ket{n} \mapsto \sum_{k=0}^n \binom{n}{k}^{\frac{1}{2}} \ket{k} \ket{n-k}
\]

\subsection{Quantum programming, algorithms and circuits}

\subsubsection{Quantum programs as qufinite ZXW diagrams}

Scalable ZX calculus~\cite{caretteSZXCalculusScalable2019, caretteColoredPropsLarge2020} has been shown to be useful for quantum programming~\cite{borgnaEncodingHighlevelQuantum2023}, where the main power comes from the generator \emph{gatherer} and its transpose \emph{divider}.
These two generators allow dealing with the register of qubits.
It turns out that the divider and gatherer are a special case of the dimension splitter and  merger respectively, where the dimensions are restricted to powers of 2:
\tikzstyle{divide}=[regular polygon, regular polygon sides=3, shape border rotate=90, draw=black, fill={zx_grey}, inner sep=1.5pt, tikzit category=scal, rounded corners=0.8mm]
\tikzstyle{gather}=[fill={zx_grey}, draw=black, tikzit category=scal, rounded corners=0.8mm, regular polygon, regular polygon sides=3, shape border rotate=-90, inner sep=1.5pt]
\begin{align*}
    &\tikzfig{applications/scalable-gather} : \C^{2^n} \otimes \C^{2^m} \mapsto \C^{2^{n + m}}
    \dcol{\\&\hspace{2cm}}\qquad \cong \qquad
    \tikzfig{applications/qufinite-merge} : \C^{2^n} \otimes \C^{2^m} \mapsto \C^{2^{n+m}}
\end{align*}
Therefore, our calculus has the ability to express similar ideas.
However, with the completeness of the qufinite ZXW calculus, we can expect further improvements in this area of research.

\subsubsection{High-level language for quantum algorithms}

To do anything interesting with a quantum computer, we need good quantum algorithms.
However, designing quantum algorithms is a challenging task.
Having a high-level graphical language could help us to understand the structure of quantum algorithms and thus design new ones.
Qufinite ZXW calculus is a promising candidate for such a language.
For example, the circuit for Quantum Fourier Transform can be nicely represented by a qufinite ZXW diagram as follows:
\begin{align*}
    &\tikzfig{applications/QFT-circuit}
    \dcol{\\&\hspace{2.5cm}}\qquad = \qquad
    \tikzfig{applications/QFT-diagramv2}
\end{align*}
Therefore, it would be interesting to explore if we can describe high-level primitives and routines in the calculus. Furthermore, we can also find the relation between other high-level reasoning frameworks such as sum-over-paths~\cite{amyLargescaleFunctionalVerification2018} and the qufinite ZXW calculus.

\subsubsection{Mixed-dimensional quantum computing}

Quantum circuits with mixed-dimensional qudits have advantages in certain applications.
Qufinite ZXW calculus allows us to leverage the strength of diagrammatic compilation and optimization techniques to the mixed-dimensional case.
For example, the following diagram implements a CNOT gate between a qubit and a $d$-level qudit.
\[
    \tikzfig{applications/mixed-dim-cnot}
\]
As a simple example, we can show that this CNOT, when applied $d$ times, is equivalent to the identity:
\begin{align*}
    \tikzfig{applications/mixed-dim-cnots-0}\quad
    &\tikzfig{applications/mixed-dim-cnots-1}\dcol{\\}
    &\tikzfig{applications/mixed-dim-cnots-2}
\end{align*}

\section{Conclusion and further work}\label{sec:future}

In this paper, we introduced the qufinite ZXW calculus, a graphical language for finite dimensional quantum theory.
Subsequently, we established a normal form for arbitrary tensors.
Via this normal form, we proved the completeness of the qufinite ZXW calculus, demonstrating that any equality between two tensors can be derived through diagrammatic rewriting using the complete set of rules.
A major consequence of this result is the equivalence between the qufinite ZXW calculus and the category of finite-dimensional Hilbert spaces.
Since the first version of the current paper, the completeness in $\FHilb$ has been extended to both the ZX calculus~\cite{poorZXcalculusComplete2024} and the ZW calculus~\cite{devismeMinimalityFiniteDimensionalZWCalculi2024}.

Several promising directions for future work are already presented in \autoref{sec:apply}.
Another compelling direction of interest is the application of qufinite ZXW calculus to tensor network contraction.
With any tensor now being representable within qufinite ZXW calculus and each equality between two tensors derivable through diagrammatic rewriting,
it would be interesting to explore techniques and strategies for tensor contraction based on rewriting.
Traditional tensor network evaluation heavily depends on tensor contraction and tensor decomposition methods (e.g.\@ SVD), which can be computationally expensive, especially for high-dimensional tensors.
In contrast, rewriting relies on pattern matching, which can be employed to rewrite the topology of the tensor network, potentially reducing bottlenecks during contraction.
An example of this technique is presented in~\cite{camSpeedingQuantumCircuits2023} using the ZX calculus;
the additional flexibility provided by the qufinite ZXW calculus may offer further advantages in optimizing tensor network operations.

It is important to note that, up to this point in our work, the exploration of ruleset minimality has been a preliminary pursuit.
By questioning the necessity of each rule and attempting to derive them from others, we anticipate that --- similar to the achievements in the qubit ZX-calculus~\cite{vilmartNearMinimalAxiomatisationZXCalculus2019} --- the complete rule-set can be further streamlined.
This refinement would capture the essential interactions more concisely and conveniently.

\section*{Acknowledgements}

We would like to thank Pablo Andres-Martinez, Bob Coecke, Alexander Cowtan, Giovanni de Felice, Amar Hadzihasanovic, Mark Koch, Sam Staton, John van de Wetering, and Lia Yeh
for their detailed feedback and their numerous suggestions for improvement of the paper.
We particularly thank John van de Wetering for his valuable discussion with us on the diagrammatic representation of symmetrizers.
RS is supported by the Clarendon Fund Scholarship.

\bibliographystyle{alphaurl}
\bibliography{preamble/references-bibtex}

\newcommand{\etalchar}[1]{$^{#1}$}
\begin{thebibliography}{dBBW20b}

\bibitem[ABD{\etalchar{+}}13]{awschalomQuantumSpintronics2013}
David~D. Awschalom, Lee~C. Bassett, Andrew~S. Dzurak, Evelyn~L. Hu, and
  Jason~R. Petta.
\newblock Quantum {{Spintronics}}: {{Engineering}} and {{Manipulating Atom-Like
  Spins}} in {{Semiconductors}}.
\newblock {\em Science}, 339(6124):1174--1179, March 2013.
\newblock \href {https://doi.org/10.1126/science.1231364}
  {\path{doi:10.1126/science.1231364}}.

\bibitem[AC04]{abramskyCategoricalSemanticsQuantum2004}
Samson Abramsky and Bob Coecke.
\newblock A categorical semantics of quantum protocols.
\newblock In {\em 19th {{IEEE}} Symposium on Logic in Computer Science
  ({{LICS}} 2004)}, pages 415--425, turku, finland, 2004. IEEE Computer
  Society.
\newblock \href {https://doi.org/10.1109/LICS.2004.1319636}
  {\path{doi:10.1109/LICS.2004.1319636}}.

\bibitem[AC08]{abramskyCategoricalQuantumMechanics2008}
Samson Abramsky and Bob Coecke.
\newblock Categorical quantum mechanics.
\newblock In {\em Handbook of {{Quantum Logic}} and {{Quantum Structures}}},
  volume~2, pages 261--325. Elsevier, August 2008.
\newblock \href {https://arxiv.org/abs/0808.1023} {\path{arXiv:0808.1023}}.

\bibitem[Amy18]{amyLargescaleFunctionalVerification2018}
Matthew Amy.
\newblock Towards large-scale functional verification of universal quantum
  circuits.
\newblock In Peter Selinger and Giulio Chiribella, editors, {\em Proceedings
  15th International Conference on Quantum Physics and Logic, {{QPL}} 2018},
  volume 287 of {\em {{EPTCS}}}, pages 1--21, halifax, canada, 2018.
\newblock \href {https://doi.org/10.4204/EPTCS.287.1}
  {\path{doi:10.4204/EPTCS.287.1}}.

\bibitem[Bac14a]{backensZXcalculusComplete2014}
Miriam Backens.
\newblock The {{ZX-calculus}} is complete for stabilizer quantum mechanics.
\newblock {\em New Journal of Physics}, 16(9):093021, September 2014.
\newblock \href {https://doi.org/10.1088/1367-2630/16/9/093021}
  {\path{doi:10.1088/1367-2630/16/9/093021}}.

\bibitem[Bac14b]{backensZXcalculusCompleteSinglequbit2014}
Miriam Backens.
\newblock The {{ZX-calculus}} is complete for the single-qubit
  {{Clifford}}+{{T}} group.
\newblock {\em Electronic Proceedings in Theoretical Computer Science},
  172:293--303, December 2014.
\newblock \href {https://doi.org/10.4204/EPTCS.172.21}
  {\path{doi:10.4204/EPTCS.172.21}}.

\bibitem[BBC{\etalchar{+}}20]{banulsSimulatingLatticeGauge2020}
Mari~Carmen Ba{\~n}uls, Rainer Blatt, Jacopo Catani, Alessio Celi, Juan~Ignacio
  Cirac, Marcello Dalmonte, Leonardo Fallani, Karl Jansen, Maciej Lewenstein,
  Simone Montangero, Christine~A. Muschik, Benni Reznik, Enrique Rico, Luca
  Tagliacozzo, Karel Van~Acoleyen, Frank Verstraete, Uwe-Jens Wiese, Matthew
  Wingate, Jakub Zakrzewski, and Peter Zoller.
\newblock Simulating lattice gauge theories within quantum technologies.
\newblock {\em The European Physical Journal D}, 74(8):165, August 2020.
\newblock \href {https://doi.org/10.1140/epjd/e2020-100571-8}
  {\path{doi:10.1140/epjd/e2020-100571-8}}.

\bibitem[BC22]{boothCompleteZXcalculi2022}
Robert~I. Booth and Titouan Carette.
\newblock Complete {{ZX-calculi}} for the stabiliser fragment in odd prime
  dimensions.
\newblock In Stefan Szeider, Robert Ganian, and Alexandra Silva, editors, {\em
  47th International Symposium on Mathematical Foundations of Computer Science
  ({{MFCS}} 2022)}, volume 241 of {\em Leibniz International Proceedings in
  Informatics (Lipics)}, pages 24:1--24:15, Dagstuhl, Germany, 2022. Schloss
  Dagstuhl -- Leibniz-Zentrum f\"ur Informatik.
\newblock URL: \url{https://drops.dagstuhl.de/opus/volltexte/2022/16822}, \href
  {https://doi.org/10.4230/LIPIcs.MFCS.2022.24}
  {\path{doi:10.4230/LIPIcs.MFCS.2022.24}}.

\bibitem[BD16]{backensCompleteGraphicalCalculus2016}
Miriam Backens and Ali~Nabi Duman.
\newblock A complete graphical calculus for spekkens’ toy bit theory.
\newblock {\em Foundations of Physics}, 46(1):70--103, January 2016.
\newblock \href {https://arxiv.org/abs/1411.1618} {\path{arXiv:1411.1618}},
  \href {https://doi.org/10.1007/s10701-015-9957-7}
  {\path{doi:10.1007/s10701-015-9957-7}}.

\bibitem[BHP{\etalchar{+}}13]{bernienHeraldedEntanglement2013}
H.~Bernien, B.~Hensen, W.~Pfaff, G.~Koolstra, M.~S. Blok, L.~Robledo, T.~H.
  Taminiau, M.~Markham, D.~J. Twitchen, L.~Childress, and R.~Hanson.
\newblock Heralded entanglement between solid-state qubits separated by three
  metres.
\newblock {\em Nature}, 497(7447):86--90, May 2013.
\newblock \href {https://arxiv.org/abs/1212.6136} {\path{arXiv:1212.6136}},
  \href {https://doi.org/10.1038/nature12016} {\path{doi:10.1038/nature12016}}.

\bibitem[BK19]{backensZHCompleteGraphical2019}
Miriam Backens and Aleks Kissinger.
\newblock {{ZH}}: {{A}} complete graphical calculus for quantum computations
  involving classical non-linearity.
\newblock In Peter Selinger and Giulio Chiribella, editors, {\em Proceedings of
  the 15th International Conference on Quantum Physics and Logic}, volume 287
  of {\em Electronic Proceedings in Theoretical Computer Science}, pages
  23--42, Halifax, Canada, 2019. Open Publishing Association.
\newblock \href {https://doi.org/10.4204/EPTCS.287.2}
  {\path{doi:10.4204/EPTCS.287.2}}.

\bibitem[BKL{\etalchar{+}}19]{baekkegaardRealizationEfficient2019}
T.~B{\ae}kkegaard, L.~B. Kristensen, N.~J.~S. Loft, C.~K. Andersen,
  D.~Petrosyan, and N.~T. Zinner.
\newblock Realization of efficient quantum gates with a superconducting
  qubit-qutrit circuit.
\newblock {\em Scientific Reports}, 9(1):13389, September 2019.
\newblock \href {https://doi.org/10.1038/s41598-019-49657-1}
  {\path{doi:10.1038/s41598-019-49657-1}}.

\bibitem[BR23]{borgnaEncodingHighlevelQuantum2023}
Augustin Borgna and Rafael Romero.
\newblock Encoding high-level quantum programs as {{SZX-diagrams}}.
\newblock In Stefano Gogioso and Matty Hoban, editors, {\em Proceedings 19th
  International Conference on Quantum Physics and Logic, Wolfson College,
  Oxford, {{UK}}, 27 June - 1 July 2022}, volume 394 of {\em Electronic
  Proceedings in Theoretical Computer Science}, pages 141--169. Open Publishing
  Association, 2023.
\newblock \href {https://doi.org/10.4204/EPTCS.394.9}
  {\path{doi:10.4204/EPTCS.394.9}}.

\bibitem[BS94]{brinkAngularMomentum1994}
D~M Brink and G~R Satchler.
\newblock {\em Angular {{Momentum}}}.
\newblock Oxford University Press, February 1994.
\newblock \href {https://doi.org/10.1093/oso/9780198517597.001.0001}
  {\path{doi:10.1093/oso/9780198517597.001.0001}}.

\bibitem[BSZ17]{bonchiInteractingHopfAlgebras2017}
Filippo Bonchi, Pawe{\l} Soboci{\'n}ski, and Fabio Zanasi.
\newblock Interacting {{Hopf}} algebras.
\newblock {\em Journal of Pure and Applied Algebra}, 221(1):144--184, January
  2017.
\newblock \href {https://arxiv.org/abs/1403.7048} {\path{arXiv:1403.7048}},
  \href {https://doi.org/10.1016/j.jpaa.2016.06.002}
  {\path{doi:10.1016/j.jpaa.2016.06.002}}.

\bibitem[Can69]{cantorUeberEinfachen1869}
Georg Cantor.
\newblock {Ueber die einfachen Zahlensysteme}.
\newblock In {\em {Zeitschrift f\"ur Mathematik und Physik}}, pages 121--128.
  1869.
\newblock URL:
  \url{http://resolver.sub.uni-goettingen.de/purl?PPN599415665_0014}.

\bibitem[CD07]{coeckeInteractingQuantumObservables2007}
B.~Coecke and R.~Duncan.
\newblock Interacting quantum observables, 2007.
\newblock URL: \url{www.cs.ox.ac.uk/people/bob.coecke/GreenRed.pdf}.

\bibitem[CD08]{coeckeInteractingQuantumObservables2008}
Bob Coecke and Ross Duncan.
\newblock Interacting {{Quantum Observables}}.
\newblock In Luca Aceto, Ivan Damg{\aa}rd, Leslie~Ann Goldberg, Magn{\'u}s~M.
  Halld{\'o}rsson, Anna Ing{\'o}lfsd{\'o}ttir, and Igor Walukiewicz, editors,
  {\em Automata, {{Languages}} and {{Programming}}}, Lecture {{Notes}} in
  {{Computer Science}}, pages 298--310, Berlin, Heidelberg, 2008. Springer.
\newblock URL:
  \url{http://personal.strath.ac.uk/ross.duncan/papers/iqo-icalp.pdf}, \href
  {https://doi.org/10.1007/978-3-540-70583-3_25}
  {\path{doi:10.1007/978-3-540-70583-3_25}}.

\bibitem[CD11]{coeckeInteractingQuantumObservables2011}
Bob Coecke and Ross Duncan.
\newblock Interacting quantum observables: Categorical algebra and
  diagrammatics.
\newblock {\em New Journal of Physics}, 13(4):043016, April 2011.
\newblock \href {https://doi.org/10.1088/1367-2630/13/4/043016}
  {\path{doi:10.1088/1367-2630/13/4/043016}}.

\bibitem[CdMT20]{coeckeFoundationsNearTermQuantum2020}
Bob Coecke, Giovanni {de Felice}, Konstantinos Meichanetzidis, and Alexis
  Toumi.
\newblock Foundations for {{Near-Term Quantum Natural Language Processing}},
  December 2020.
\newblock \href {https://arxiv.org/abs/2012.03755} {\path{arXiv:2012.03755}}.

\bibitem[CE11]{coeckeThreeQubitEntanglement2011}
Bob Coecke and Bill Edwards.
\newblock Three qubit entanglement within graphical {{Z}}/{{X-calculus}}.
\newblock {\em Electronic Proceedings in Theoretical Computer Science},
  52:22--33, March 2011.
\newblock \href {https://doi.org/10.4204/EPTCS.52.3}
  {\path{doi:10.4204/EPTCS.52.3}}.

\bibitem[CES11]{coeckePhaseGroupsOrigin2011}
Bob Coecke, Bill Edwards, and Robert~W. Spekkens.
\newblock Phase {{Groups}} and the {{Origin}} of {{Non-locality}} for
  {{Qubits}}.
\newblock {\em Electronic Notes in Theoretical Computer Science},
  270(2):15--36, February 2011.
\newblock \href {https://doi.org/10.1016/j.entcs.2011.01.021}
  {\path{doi:10.1016/j.entcs.2011.01.021}}.

\bibitem[CG22]{coeckeQuantumPictures2022}
Bob Coecke and Stefano Gogioso.
\newblock {\em Quantum in Pictures}.
\newblock Quantinuum, 2022.

\bibitem[CHP19]{caretteSZXCalculusScalable2019}
Titouan Carette, Dominic Horsman, and Simon Perdrix.
\newblock {{SZX-Calculus}}: {{Scalable Graphical Quantum Reasoning}}.
\newblock In Peter Rossmanith, Pinar Heggernes, and Joost-Pieter Katoen,
  editors, {\em 44th {{International Symposium}} on {{Mathematical
  Foundations}} of {{Computer Science}} ({{MFCS}} 2019)}, volume 138 of {\em
  Leibniz {{International Proceedings}} in {{Informatics}} ({{LIPIcs}})}, pages
  55:1--55:15, Dagstuhl, Germany, 2019. Schloss Dagstuhl--Leibniz-Zentrum fuer
  Informatik.
\newblock \href {https://doi.org/10.4230/LIPIcs.MFCS.2019.55}
  {\path{doi:10.4230/LIPIcs.MFCS.2019.55}}.

\bibitem[CK17]{coeckePicturingQuantum2017}
Bob Coecke and Aleks Kissinger.
\newblock {\em Picturing Quantum Processes}.
\newblock Cambridge University Press, March 2017.
\newblock \href {https://doi.org/10.1017/9781316219317}
  {\path{doi:10.1017/9781316219317}}.

\bibitem[CM23]{camSpeedingQuantumCircuits2023}
Tristan Cam and Simon Martiel.
\newblock Speeding up quantum circuits simulation using {{ZX-Calculus}}, May
  2023.
\newblock \href {https://arxiv.org/abs/2305.02669} {\path{arXiv:2305.02669}}.

\bibitem[CP20]{caretteColoredPropsLarge2020}
Titouan Carette and Simon Perdrix.
\newblock Colored props for large scale graphical reasoning, July 2020.
\newblock \href {https://arxiv.org/abs/2007.03564} {\path{arXiv:2007.03564}}.

\bibitem[CvdW23]{codsiClassicallySimulatingQuantum2023}
Julien Codsi and John van~de Wetering.
\newblock Classically simulating quantum supremacy iqp circuits through a
  random graph approach, January 2023.
\newblock \href {https://arxiv.org/abs/2212.08609} {\path{arXiv:2212.08609}}.

\bibitem[dBBW20a]{debeaudrapFastEffective2020}
Niel de~Beaudrap, Xiaoning Bian, and Quanlong Wang.
\newblock Fast and effective techniques for t-count reduction via spider nest
  identities.
\newblock In Steven~T. Flammia, editor, {\em 15th conference on the theory of
  quantum computation, communication and cryptography (TQC 2020)}, volume 158
  of {\em Leibniz international proceedings in informatics (lipics)}, page
  11:1–11:23, Dagstuhl, Germany, 2020. Schloss Dagstuhl–Leibniz-Zentrum
  für Informatik.
\newblock \href {https://doi.org/10.4230/LIPIcs.TQC.2020.11}
  {\path{doi:10.4230/LIPIcs.TQC.2020.11}}.

\bibitem[dBBW20b]{debeaudrapTechniquesReducePi2020}
Niel de~Beaudrap, Xiaoning Bian, and Quanlong Wang.
\newblock Techniques to reduce $\pi/4$-parity-phase circuits, motivated by the
  zx calculus.
\newblock {\em Electronic Proceedings in Theoretical Computer Science},
  318:131--149, May 2020.
\newblock \href {https://doi.org/10.4204/EPTCS.318.9}
  {\path{doi:10.4204/EPTCS.318.9}}.

\bibitem[dH20]{debeaudrapZXCalculus2020}
Niel {de Beaudrap} and Dominic Horsman.
\newblock The {{ZX}} calculus is a language for surface code lattice surgery.
\newblock {\em Quantum}, 4:218, January 2020.
\newblock \href {https://doi.org/10.22331/q-2020-01-09-218}
  {\path{doi:10.22331/q-2020-01-09-218}}.

\bibitem[dKM21]{debeaudrapTensorNetworkRewriting2021}
Niel {de Beaudrap}, Aleks Kissinger, and Konstantinos Meichanetzidis.
\newblock Tensor network rewriting strategies for satisfiability and counting.
\newblock In Beno{\^i}t Valiron, Shane Mansfield, Pablo Arrighi, and Prakash
  Panangaden, editors, {\em Proceedings 17th International Conference on
  Quantum Physics and Logic}, volume 340 of {\em Electronic Proceedings in
  Theoretical Computer Science}, pages 46--59, Paris, France, 2021. Open
  Publishing Association.
\newblock \href {https://doi.org/10.4204/EPTCS.340.3}
  {\path{doi:10.4204/EPTCS.340.3}}.

\bibitem[DP09]{duncanGraphStatesNecessity2009}
Ross Duncan and Simon Perdrix.
\newblock Graph {{States}} and the {{Necessity}} of {{Euler Decomposition}}.
\newblock In Klaus {Ambos-Spies}, Benedikt L{\"o}we, and Wolfgang Merkle,
  editors, {\em Mathematical {{Theory}} and {{Computational Practice}}},
  Lecture {{Notes}} in {{Computer Science}}, pages 167--177, Berlin,
  Heidelberg, 2009. Springer.
\newblock \href {https://arxiv.org/abs/0902.0500} {\path{arXiv:0902.0500}},
  \href {https://doi.org/10.1007/978-3-642-03073-4_18}
  {\path{doi:10.1007/978-3-642-03073-4_18}}.

\bibitem[dSP{\etalchar{+}}23]{defeliceLightMatterInteractionZXW2023}
Giovanni {de Felice}, Razin~A. Shaikh, Boldizs{\'a}r Po{\'o}r, Lia Yeh,
  Quanlong Wang, and Bob Coecke.
\newblock Light-{{Matter Interaction}} in the {{ZXW Calculus}}.
\newblock In Shane Mansfield, Benoit Val{\^i}ron, and Vladimir Zamdzhiev,
  editors, {\em Proceedings of the Twentieth International Conference on
  Quantum Physics and Logic, Paris, France, 17-21st July 2023}, volume 384 of
  {\em Electronic Proceedings in Theoretical Computer Science}, pages 20--46.
  Open Publishing Association, 2023.
\newblock \href {https://doi.org/10.4204/EPTCS.384.2}
  {\path{doi:10.4204/EPTCS.384.2}}.

\bibitem[dV24]{devismeMinimalityFiniteDimensionalZWCalculi2024}
Marc {de Visme} and Renaud Vilmart.
\newblock Minimality in {{Finite-Dimensional ZW-Calculi}}, January 2024.
\newblock \href {https://arxiv.org/abs/2401.16225} {\path{arXiv:2401.16225}}.

\bibitem[DW13]{diSynthesisMultivalued2013}
Yao-Min Di and Hai-Rui Wei.
\newblock Synthesis of multivalued quantum logic circuits by elementary gates.
\newblock {\em Physical Review A}, 87(1):012325, January 2013.
\newblock \href {https://arxiv.org/abs/1302.0056} {\path{arXiv:1302.0056}},
  \href {https://doi.org/10.1103/PhysRevA.87.012325}
  {\path{doi:10.1103/PhysRevA.87.012325}}.

\bibitem[DYP{\etalchar{+}}23]{dundar-coeckeQuantumPicturalismLearning2023}
Selma {D{\"u}ndar-Coecke}, Lia Yeh, Caterina Puca, Sieglinde M.-L. Pfaendler,
  Muhammad~Hamza Waseem, Thomas Cervoni, Aleks Kissinger, Stefano Gogioso, and
  Bob Coecke.
\newblock Quantum {{Picturalism}}: {{Learning Quantum Theory}} in {{High
  School}}.
\newblock In {\em 2023 {{IEEE International Conference}} on {{Quantum
  Computing}} and {{Engineering}} ({{QCE}})}, volume~03, pages 21--32,
  September 2023.
\newblock \href {https://arxiv.org/abs/2312.03653} {\path{arXiv:2312.03653}},
  \href {https://doi.org/10.1109/QCE57702.2023.20321}
  {\path{doi:10.1109/QCE57702.2023.20321}}.

\bibitem[EMV22]{eastSpinnetworksZXcalculus2022}
Richard D.~P. East, Pierre {Martin-Dussaud}, and John {Van de Wetering}.
\newblock Spin-networks in the {{ZX-calculus}}, November 2022.
\newblock \href {https://arxiv.org/abs/2111.03114} {\path{arXiv:2111.03114}}.

\bibitem[EvCG22]{eastAKLTStatesZXDiagrams2022}
Richard~D.P. East, John {van de Wetering}, Nicholas Chancellor, and Adolfo~G.
  Grushin.
\newblock {{AKLT-States}} as {{ZX-Diagrams}}: {{Diagrammatic Reasoning}} for
  {{Quantum States}}.
\newblock {\em PRX Quantum}, 3(1):010302, January 2022.
\newblock \href {https://doi.org/10.1103/PRXQuantum.3.010302}
  {\path{doi:10.1103/PRXQuantum.3.010302}}.

\bibitem[Had15]{hadzihasanovicDiagrammaticAxiomatisationQubit2015}
Amar Hadzihasanovic.
\newblock A {{Diagrammatic Axiomatisation}} for {{Qubit Entanglement}}.
\newblock In {\em Proceedings of the 2015 30th {{Annual ACM}}/{{IEEE
  Symposium}} on {{Logic}} in {{Computer Science}}}, {{LICS}} '15, pages
  573--584, USA, July 2015. IEEE Computer Society.
\newblock \href {https://arxiv.org/abs/1501.07082} {\path{arXiv:1501.07082}},
  \href {https://doi.org/10.1109/LICS.2015.59}
  {\path{doi:10.1109/LICS.2015.59}}.

\bibitem[HNW18]{hadzihasanovicTwoCompleteAxiomatisations2018}
Amar Hadzihasanovic, Kang~Feng Ng, and Quanlong Wang.
\newblock Two complete axiomatisations of pure-state qubit quantum computing.
\newblock In {\em Proceedings of the 33rd {{Annual ACM}}/{{IEEE Symposium}} on
  {{Logic}} in {{Computer Science}}}, {{LICS}} '18, pages 502--511, New York,
  NY, USA, July 2018. Association for Computing Machinery.
\newblock URL:
  \url{https://www.cs.ox.ac.uk/files/11713/proceedings_paper_710-Two%20complete%20axiomatisations%20of%20pure-state%20qubit%20quantum%20computing.pdf},
  \href {https://doi.org/10.1145/3209108.3209128}
  {\path{doi:10.1145/3209108.3209128}}.

\bibitem[HR15]{hackneyCategoryProps2015}
Philip Hackney and Marcy Robertson.
\newblock On the {{Category}} of {{Props}}.
\newblock {\em Applied Categorical Structures}, 23(4):543--573, August 2015.
\newblock \href {https://arxiv.org/abs/1207.2773} {\path{arXiv:1207.2773}},
  \href {https://doi.org/10.1007/s10485-014-9369-4}
  {\path{doi:10.1007/s10485-014-9369-4}}.

\bibitem[HV19]{heunenCategoriesQuantumTheory2019}
Chris Heunen and Jamie Vicary.
\newblock {\em Categories for {{Quantum Theory}}: {{An Introduction}}}.
\newblock Oxford University Press, November 2019.
\newblock \href {https://doi.org/10.1093/oso/9780198739623.001.0001}
  {\path{doi:10.1093/oso/9780198739623.001.0001}}.

\bibitem[JPV18a]{jeandelCompleteAxiomatisationZXCalculus2018}
Emmanuel Jeandel, Simon Perdrix, and Renaud Vilmart.
\newblock A {{Complete Axiomatisation}} of the {{ZX-Calculus}} for
  {{Clifford}}+{{T Quantum Mechanics}}.
\newblock In {\em Proceedings of the 33rd {{Annual ACM}}/{{IEEE Symposium}} on
  {{Logic}} in {{Computer Science}}}, {{LICS}} '18, pages 559--568, New York,
  NY, USA, July 2018. Association for Computing Machinery.
\newblock \href {https://arxiv.org/abs/1705.11151} {\path{arXiv:1705.11151}},
  \href {https://doi.org/10.1145/3209108.3209131}
  {\path{doi:10.1145/3209108.3209131}}.

\bibitem[JPV18b]{jeandelDiagrammaticReasoningClifford2018}
Emmanuel Jeandel, Simon Perdrix, and Renaud Vilmart.
\newblock Diagrammatic {{Reasoning}} beyond {{Clifford}}+{{T Quantum
  Mechanics}}.
\newblock In {\em Proceedings of the 33rd {{Annual ACM}}/{{IEEE Symposium}} on
  {{Logic}} in {{Computer Science}}}, {{LICS}} '18, pages 569--578, New York,
  NY, USA, July 2018. Association for Computing Machinery.
\newblock \href {https://arxiv.org/abs/1801.10142} {\path{arXiv:1801.10142}},
  \href {https://doi.org/10.1145/3209108.3209139}
  {\path{doi:10.1145/3209108.3209139}}.

\bibitem[Kis22]{kissingerPhasefreeZXDiagrams2022}
Aleks Kissinger.
\newblock Phase-free zx diagrams are css codes (...or how to graphically grok
  the surface code), April 2022.
\newblock \href {https://arxiv.org/abs/2204.14038} {\path{arXiv:2204.14038}}.

\bibitem[KLS23]{khesinGraphicalQuantumCliffordencoder2023}
Andrey~Boris Khesin, Jonathan~Z. Lu, and Peter~W. Shor.
\newblock Graphical quantum {{Clifford-encoder}} compilers from the {{ZX}}
  calculus, January 2023.
\newblock \href {https://arxiv.org/abs/2301.02356} {\path{arXiv:2301.02356}}.

\bibitem[Knu97]{knuthPositionalNumber1997}
Donald~Ervin Knuth.
\newblock Positional {{Number Systems}}.
\newblock In {\em The Art of Computer Programming: {{Seminumerical
  Algorithms}}}, volume~2, pages 195--213. Pearson Education, 3 edition, 1997.

\bibitem[KP06]{khanSynthesisMultiquditHybrid2006}
Faisal~Shah Khan and Marek Perkowski.
\newblock Synthesis of multi-qudit hybrid and d-valued quantum logic circuits
  by decomposition.
\newblock {\em Theoretical Computer Science}, 367(3):336--346, December 2006.
\newblock \href {https://doi.org/10.1016/j.tcs.2006.09.006}
  {\path{doi:10.1016/j.tcs.2006.09.006}}.

\bibitem[Kv20]{kissingerReducingTcountZXcalculus2020}
Aleks Kissinger and John {van de Wetering}.
\newblock Reducing {{T-count}} with the {{ZX-calculus}}.
\newblock {\em Physical Review A}, 102(2):022406, August 2020.
\newblock \href {https://arxiv.org/abs/1903.10477} {\path{arXiv:1903.10477}},
  \href {https://doi.org/10.1103/PhysRevA.102.022406}
  {\path{doi:10.1103/PhysRevA.102.022406}}.

\bibitem[KvdW19]{kissingerUniversalMBQC2019}
Aleks Kissinger and John van~de Wetering.
\newblock Universal {{MBQC}} with generalised parity-phase interactions and
  {{Pauli}} measurements.
\newblock {\em Quantum}, 3:134, April 2019.
\newblock \href {https://doi.org/10.22331/q-2019-04-26-134}
  {\path{doi:10.22331/q-2019-04-26-134}}.

\bibitem[KvV22]{kissingerClassicalSimulationQuantum2022}
Aleks Kissinger, John {van de Wetering}, and Renaud Vilmart.
\newblock Classical simulation of quantum circuits with partial and graphical
  stabiliser decompositions, February 2022.
\newblock \href {https://doi.org/10.4230/LIPIcs.TQC.2022.5}
  {\path{doi:10.4230/LIPIcs.TQC.2022.5}}.

\bibitem[KYW24]{kochContractionZX2024}
Mark Koch, Richie Yeung, and Quanlong Wang.
\newblock Contraction of {{ZX}} diagrams with triangles via stabiliser
  decompositions.
\newblock {\em Physica Scripta}, 99(10):105122, September 2024.
\newblock \href {https://arxiv.org/abs/2307.01803} {\path{arXiv:2307.01803}},
  \href {https://doi.org/10.1088/1402-4896/ad6fd8}
  {\path{doi:10.1088/1402-4896/ad6fd8}}.

\bibitem[LBA{\etalchar{+}}09]{lanyonSimplifyingQuantumLogic2009}
Benjamin~P. Lanyon, Marco Barbieri, Marcelo~P. Almeida, Thomas Jennewein,
  Timothy~C. Ralph, Kevin~J. Resch, Geoff~J. Pryde, Jeremy~L. O'Brien, Alexei
  Gilchrist, and Andrew~G. White.
\newblock Simplifying quantum logic using higher-dimensional {{Hilbert}}
  spaces.
\newblock {\em Nature Physics}, 5(2):134--140, February 2009.
\newblock \href {https://arxiv.org/abs/0804.0272} {\path{arXiv:0804.0272}},
  \href {https://doi.org/10.1038/nphys1150} {\path{doi:10.1038/nphys1150}}.

\bibitem[LMv22]{laakkonenGraphicalSATAlgorithm2022}
Tuomas Laakkonen, Konstantinos Meichanetzidis, and John {van de Wetering}.
\newblock A {{Graphical}} \#{{SAT Algorithm}} for {{Formulae}} with {{Small
  Clause Density}}, December 2022.
\newblock \href {https://arxiv.org/abs/2212.08048} {\path{arXiv:2212.08048}}.

\bibitem[LMv23]{laakkonenPicturingCountingReductions2023}
Tuomas Laakkonen, Konstantinos Meichanetzidis, and John {van de Wetering}.
\newblock Picturing counting reductions with the {{ZH-Calculus}}.
\newblock In Shane Mansfield, Benoit Val{\^i}ron, and Vladimir Zamdzhiev,
  editors, {\em Proceedings of the Twentieth International Conference on
  Quantum Physics and Logic}, volume 384 of {\em Electronic Proceedings in
  Theoretical Computer Science}, pages 89--113, Paris, France, 2023. Open
  Publishing Association.
\newblock \href {https://doi.org/10.4204/EPTCS.384.6}
  {\path{doi:10.4204/EPTCS.384.6}}.

\bibitem[{Mar}19]{martin-dussaudPrimerGroupTheory2019}
Pierre {Martin-Dussaud}.
\newblock A primer of group theory for {{Loop Quantum Gravity}} and spin-foams.
\newblock {\em General Relativity and Gravitation}, 51(9):110, September 2019.
\newblock \href {https://arxiv.org/abs/1902.08439} {\path{arXiv:1902.08439}},
  \href {https://doi.org/10.1007/s10714-019-2583-5}
  {\path{doi:10.1007/s10714-019-2583-5}}.

\bibitem[MGd{\etalchar{+}}21]{meichanetzidisQuantumNaturalLanguage2021}
Konstantinos Meichanetzidis, Stefano Gogioso, Giovanni {de Felice}, Nicol{\`o}
  Chiappori, Alexis Toumi, and Bob Coecke.
\newblock Quantum natural language processing on near-term quantum computers.
\newblock In Beno{\^i}t Valiron, Shane Mansfield, Pablo Arrighi, and Prakash
  Panangaden, editors, {\em Proceedings 17th International Conference on
  Quantum Physics and Logic}, volume 340 of {\em Electronic Proceedings in
  Theoretical Computer Science}, pages 213--229. Open Publishing Association,
  September 2021.
\newblock \href {https://doi.org/10.4204/EPTCS.340.11}
  {\path{doi:10.4204/EPTCS.340.11}}.

\bibitem[MHW23]{matoCompressionQubitCircuits2023}
Kevin Mato, Stefan Hillmich, and Robert Wille.
\newblock Compression of {{Qubit Circuits}}: {{Mapping}} to {{Mixed-Dimensional
  Quantum Systems}}.
\newblock In {\em {{IEEE International Conference}} on {{Quantum Software}}
  ({{QSW}} 2023)}, pages 155--161, Chicago, IL, USA, July 2023. IEEE Computer
  Society.
\newblock URL:
  \url{https://www.cda.cit.tum.de/files/eda/2023_qsw_compression_of_qubit_circuits.pdf},
  \href {https://doi.org/10.1109/QSW59989.2023.00027}
  {\path{doi:10.1109/QSW59989.2023.00027}}.

\bibitem[NC10]{nielsenQuantumComputationQuantum2010}
Michael~A. Nielsen and Isaac~L. Chuang.
\newblock {\em Quantum Computation and Quantum Information}.
\newblock Cambridge University Press, Cambridge, 2010.
\newblock \href {https://doi.org/10.1017/CBO9780511976667}
  {\path{doi:10.1017/CBO9780511976667}}.

\bibitem[NW17]{ngUniversalCompletionZXcalculus2017}
Kang~Feng Ng and Quanlong Wang.
\newblock A universal completion of the {{ZX-calculus}}, June 2017.
\newblock \href {https://arxiv.org/abs/1706.09877} {\path{arXiv:1706.09877}}.

\bibitem[PBC{\etalchar{+}}23]{poorQupitStabiliserZXtravaganza2023}
Boldizs{\'a}r Po{\'o}r, Robert~I. Booth, Titouan Carette, John {van de
  Wetering}, and Lia Yeh.
\newblock The {{Qupit Stabiliser ZX-travaganza}}: {{Simplified Axioms}},
  {{Normal Forms}} and {{Graph-Theoretic Simplification}}.
\newblock In Shane Mansfield, Benoit Val{\^i}ron, and Vladimir Zamdzhiev,
  editors, {\em Proceedings of the Twentieth International Conference on
  Quantum Physics and Logic, Paris, France, 17-21st July 2023}, volume 384 of
  {\em Electronic Proceedings in Theoretical Computer Science}, pages 220--264.
  Open Publishing Association, 2023.
\newblock \href {https://doi.org/10.4204/EPTCS.384.13}
  {\path{doi:10.4204/EPTCS.384.13}}.

\bibitem[Pen71]{penroseAngularMomentumApproach1971}
Roger Penrose.
\newblock Angular momentum: An approach to combinatorial space-time.
\newblock In Ted Bastin, editor, {\em Quantum Theory and Beyond}, pages
  151--180. Cambridge University Press, Cambridge, 1971.
\newblock URL:
  \url{https://www.gravityalchemist.com/wp-content/uploads/2022/03/Penrose-AngularMomentum.pdf}.

\bibitem[Pri25]{priestleyFiniteDimensionalZXCalculus2025}
Ben Priestley.
\newblock Finite-{{Dimensional ZX-Calculus}} for {{Loop Quantum Gravity}},
  November 2025.
\newblock \href {https://arxiv.org/abs/2511.15966} {\path{arXiv:2511.15966}}.

\bibitem[PSW24]{poorZXcalculusComplete2024}
Boldizs{\'a}r Po{\'o}r, Razin~A. Shaikh, and Quanlong Wang.
\newblock {{ZX-calculus}} is {{Complete}} for {{Finite-Dimensional Hilbert
  Spaces}}, May 2024.
\newblock \href {https://arxiv.org/abs/2405.10896} {\path{arXiv:2405.10896}}.

\bibitem[PWS{\etalchar{+}}23]{poorCompletenessArbitraryFinite2023}
Boldizs{\'a}r Po{\'o}r, Quanlong Wang, Razin~A. Shaikh, Lia Yeh, Richie Yeung,
  and Bob Coecke.
\newblock Completeness for arbitrary finite dimensions of {{ZXW-calculus}}, a
  unifying calculus.
\newblock In {\em 2023 38th {{Annual ACM}}/{{IEEE Symposium}} on {{Logic}} in
  {{Computer Science}} ({{LICS}})}, pages 1--14, Boston, MA, USA, June 2023.
\newblock \href {https://arxiv.org/abs/2302.12135} {\path{arXiv:2302.12135}},
  \href {https://doi.org/10.1109/LICS56636.2023.10175672}
  {\path{doi:10.1109/LICS56636.2023.10175672}}.

\bibitem[RS95]{rovelliSpinNetworks1995}
Carlo Rovelli and Lee Smolin.
\newblock Spin networks and quantum gravity.
\newblock {\em Physical Review D}, 52(10):5743--5759, November 1995.
\newblock \href {https://arxiv.org/abs/gr-qc/9505006}
  {\path{arXiv:gr-qc/9505006}}, \href
  {https://doi.org/10.1103/PhysRevD.52.5743}
  {\path{doi:10.1103/PhysRevD.52.5743}}.

\bibitem[RvY23]{royQuditZHCalculusGeneralised2023}
Patrick Roy, John {van de Wetering}, and Lia Yeh.
\newblock The {{Qudit ZH-Calculus}}: {{Generalised Toffoli}}+{{Hadamard}} and
  {{Universality}}.
\newblock In Shane Mansfield, Benoit Val{\^i}ron, and Vladimir Zamdzhiev,
  editors, {\em Proceedings of the Twentieth International Conference on
  Quantum Physics and Logic, Paris, France, 17-21st July 2023}, volume 384 of
  {\em Electronic Proceedings in Theoretical Computer Science}, pages 142--170.
  Open Publishing Association, 2023.
\newblock \href {https://doi.org/10.4204/EPTCS.384.9}
  {\path{doi:10.4204/EPTCS.384.9}}.

\bibitem[Sel11]{selingerSurveyGraphical2011}
P.~Selinger.
\newblock A {{Survey}} of {{Graphical Languages}} for {{Monoidal Categories}}.
\newblock In Bob Coecke, editor, {\em New {{Structures}} for {{Physics}}},
  Lecture {{Notes}} in {{Physics}}, pages 289--355. Springer, Berlin,
  Heidelberg, 2011.
\newblock \href {https://arxiv.org/abs/0908.3347} {\path{arXiv:0908.3347}},
  \href {https://doi.org/10.1007/978-3-642-12821-9_4}
  {\path{doi:10.1007/978-3-642-12821-9_4}}.

\bibitem[Sel12]{selingerFiniteDimensional2012}
Peter Selinger.
\newblock Finite dimensional {{Hilbert}} spaces are complete for dagger compact
  closed categories.
\newblock {\em Logical Methods in Computer Science}, Volume 8, Issue 3, August
  2012.
\newblock \href {https://doi.org/10.2168/LMCS-8(3:6)2012}
  {\path{doi:10.2168/LMCS-8(3:6)2012}}.

\bibitem[SWK21]{signorelliCompositionalModel2021}
Camilo~Miguel Signorelli, Quanlong Wang, and Ilyas Khan.
\newblock A {{Compositional Model}} of {{Consciousness Based}} on
  {{Consciousness-Only}}.
\newblock {\em Entropy}, 23(3):308, March 2021.
\newblock \href {https://doi.org/10.3390/e23030308}
  {\path{doi:10.3390/e23030308}}.

\bibitem[SWY23]{shaikhHowSum2023}
Razin~A. Shaikh, Quanlong Wang, and Richie Yeung.
\newblock How to sum and exponentiate hamiltonians in {{ZXW}} calculus.
\newblock In Stefano Gogioso and Matty Hoban, editors, {\em Proceedings 19th
  International Conference on Quantum Physics and Logic, Wolfson College,
  Oxford, {{UK}}, 27 June - 1 July 2022}, volume 394 of {\em Electronic
  Proceedings in Theoretical Computer Science}, pages 236--261. Open Publishing
  Association, 2023.
\newblock \href {https://doi.org/10.4204/EPTCS.394.14}
  {\path{doi:10.4204/EPTCS.394.14}}.

\bibitem[Vil19]{vilmartNearMinimalAxiomatisationZXCalculus2019}
Renaud Vilmart.
\newblock A {{Near-Minimal Axiomatisation}} of {{ZX-Calculus}} for {{Pure Qubit
  Quantum Mechanics}}.
\newblock In {\em 2019 34th {{Annual ACM}}/{{IEEE Symposium}} on {{Logic}} in
  {{Computer Science}} ({{LICS}})}, pages 1--10, June 2019.
\newblock \href {https://arxiv.org/abs/1812.09114} {\path{arXiv:1812.09114}},
  \href {https://doi.org/10.1109/LICS.2019.8785765}
  {\path{doi:10.1109/LICS.2019.8785765}}.

\bibitem[Wan18]{wangQutritZXcalculusComplete2018}
Quanlong Wang.
\newblock Qutrit {{ZX-calculus}} is complete for stabilizer quantum mechanics.
\newblock In Bob Coecke and Aleks Kissinger, editors, {\em Proceedings 14th
  International Conference on Quantum Physics and Logic, Nijmegen, the
  Netherlands}, volume 266 of {\em Electronic Proceedings in Theoretical
  Computer Science}, pages 58--70. Open Publishing Association, 2018.
\newblock \href {https://doi.org/10.4204/EPTCS.266.3}
  {\path{doi:10.4204/EPTCS.266.3}}.

\bibitem[Wan22]{wangQufiniteZXcalculusUnified2022}
Quanlong Wang.
\newblock Qufinite {{ZX-calculus}}: A unified framework of qudit
  {{ZX-calculi}}, September 2022.
\newblock \href {https://arxiv.org/abs/2104.06429} {\path{arXiv:2104.06429}}.

\bibitem[WES{\etalchar{+}}25]{wangPenroseTensor2025}
Quanlong Wang, Richard D.~P. East, Razin~A. Shaikh, Lia Yeh, Boldizs{\'a}r
  Po{\'o}r, and Bob Coecke.
\newblock Beyond {{Penrose}} tensor diagrams with the {{ZX}} calculus:
  {{Applications}} to quantum computing, quantum machine learning, condensed
  matter physics, and quantum gravity, November 2025.
\newblock \href {https://arxiv.org/abs/2511.06012} {\path{arXiv:2511.06012}}.

\bibitem[WYK24]{wangDifferentiatingIntegrating2024}
Quanlong Wang, Richie Yeung, and Mark Koch.
\newblock Differentiating and {{Integrating ZX Diagrams}} with {{Applications}}
  to {{Quantum Machine Learning}}.
\newblock {\em Quantum}, 8:1491, October 2024.
\newblock \href {https://doi.org/10.22331/q-2024-10-04-1491}
  {\path{doi:10.22331/q-2024-10-04-1491}}.

\bibitem[Yau08]{yauHigherDimensionalAlgebras2008}
Donald Yau.
\newblock Higher dimensional algebras via colored {{PROPs}}, September 2008.
\newblock \href {https://arxiv.org/abs/0809.2161} {\path{arXiv:0809.2161}}.

\end{thebibliography}

\appendix

\allowdisplaybreaks
\setlength{\jot}{20pt}

\section{Lemmas}\label{set:lemmas}

\subsection{Lemmas for the qudit setting}\label{subsec:qudit-lemmas}

\begin{lem}
  \label{kjgalm}\cite{wangQufiniteZXcalculusUnified2022}
  \[
    \tikzfig{lemmas/kjga}
  \]
  where $\overrightarrow{a}=(a_1,\cdots, a_{d-1})$, $j \in \{ 1,\cdots, d-1\}.$
\end{lem}
\begin{proof}
  Same as~\zxw{Lemma 7}.
\end{proof}



\begin{lem}
  \label{zeroemptyditlm}\cite{wangQufiniteZXcalculusUnified2022}
  \tikzfig{lemmas/zeroemptydit}
\end{lem}
\begin{proof}
  Same as~\zxw{Lemma 9}.
\end{proof}


\begin{lem}
  \label{k1zbox1}
  \[
    \tikzfig{lemmas/k1-zbox1}
  \]
\end{lem}
\begin{proof}
  \[
    \tikzfig{lemmas/k1-zbox1-pf}
  \]
  \qedhere
\end{proof}



\begin{lem}
  \label{s4lm}\cite{wangQufiniteZXcalculusUnified2022}
  \begin{gather}
    \tikzfig{lemmas/redspider0pfusedit2}
    \tag{S4}\label{rule:S4}
  \end{gather}
\end{lem}
\begin{proof}
  Same as~\zxw{Lemma 13}.
\end{proof}




\begin{lem}
  \label{zid}
  \tikzfig{lemmas/z-id}
\end{lem}
\begin{proof}
  Same as~\cite[Lemma 22]{devismeMinimalityFiniteDimensionalZWCalculi2024}
\end{proof}

\begin{lem}
  \label{ww-projector}
  \tikzfig{lemmas/w-w-algebra-projector}
\end{lem}
\begin{proof}
  Same as~\cite[Lemma 23]{devismeMinimalityFiniteDimensionalZWCalculi2024}
\end{proof}

\begin{lem}
  \label{dboxsquarelm}
  \tikzfig{lemmas/dboxsquare}
\end{lem}
\begin{proof}
  Same as~\zxw{Lemma 16}.
\end{proof}


\begin{lem}
  \label{dboxgdotlm}
  \[
    \tikzfig{lemmas/dboxgdot}
  \]
\end{lem}
\begin{proof}
  Same as~\zxw{Lemma 17}.
\end{proof}




\begin{lem}
  \label{dboxslidegrnlm}
  \[
    \tikzfig{lemmas/dboxslidegrn-2}
  \]
\end{lem}
\begin{proof}
  Same as~\zxw{Lemma 20}.
\end{proof}

\begin{lem}
  \label{multiplierpushlm}
  Suppose $x \in \{0, \dotsc, d - 1 \}$.
  Then
  \[
    \tikzfig{lemmas/multipliers/multiplier-push} \qquad
    \qquad
    \tikzfig{lemmas/multipliers/multiplier-push-dual}
  \]
\end{lem}
\begin{proof}
  Same as~\zxw{Lemma 31}.
\end{proof}


\begin{lem}
  \label{multipliermultlm}
  Suppose $x, y \in \{0, \dotsc, d - 1 \}$.
  Then
  \[
    \tikzfig{lemmas/multipliers/multiplier-mult}
  \]
\end{lem}
\begin{proof}
  Same as~\zxw{Lemma 32}.
\end{proof}

\begin{lem}
  \label{multiplieraddlm}
  \[
    \tikzfig{lemmas/multipliers/multiplier-add}
  \]
\end{lem}
\begin{proof}
  Same as~\zxw{Lemma 33}.
\end{proof}


\begin{lem}
  \label{dualisermultipliermultlm}
  Suppose $x \in \{0, \dotsc, d - 1 \}$.
  Then
  \[
    \tikzfig{lemmas/multipliers/dualiser-multiplier-mult}
  \]
\end{lem}
\begin{proof}
  Same as~\zxw{Lemma 34}.
\end{proof}

\begin{lem}
  \label{dboxgcopylm}
  \[
    \tikzfig{lemmas/dboxgcopy}
  \]
\end{lem}
\begin{proof}
  Same as~\zxw{Lemma 35}.
\end{proof}

\begin{lem}
  \label{multiplier-kj}
  \[
    \tikzfig{lemmas/multipliers/multiplier-kj}
  \]
\end{lem}
\begin{proof}
  This follows from Rules~\eqref{rule:Mu},~\eqref{rule:K0}, and~\eqref{rule:S4}.
\end{proof}

\begin{lem}
  \label{xtransposelm}
  \[
    \tikzfig{lemmas/xtranspose}
  \]
\end{lem}
\begin{proof}
  Same as~\zxw{Lemma 36}.
\end{proof}

\begin{lem}
  \label{2pinkfusionlm}
  \[
    \tikzfig{qudit-nf-trace/lem1}
  \]
\end{lem}
\begin{proof}
  Same as~\zxw{Lemma 48}.
\end{proof}


\begin{lem}
  \label{wdecomplm}
  \[
    \tikzfig{lemmas/wdecomp}
  \]
\end{lem}
\begin{proof}
  Same as~\zxw{Lemma 37} with the definition of the triangle expanded.
\end{proof}

\begin{lem}
  \label{wunitlm}
  \[
    \tikzfig{lemmas/wunit}
  \]
\end{lem}
\begin{proof}
  Same as~\zxw{Lemma 38}.
\end{proof}

\begin{lem}
  \label{wk1}
  \[
    \tikzfig{lemmas/wk1}
  \]
\end{lem}
\begin{proof}
  \[
    \tikzfig{lemmas/wk1-pf}
  \]
  \qedhere
\end{proof}


\begin{lem}
  \label{wtransposelm}
  \[
    \tikzfig{lemmas/wtranspose}
  \]
\end{lem}
\begin{proof}
  Same as~\zxw{Lemma 44}.
\end{proof}


%
%


\begin{lem}
  \label{lem:qudit-zboxnf}
  \[
    \tikzfig{qudit-nf-generators/zboxshort}
  \]
\end{lem}
\begin{proof}
  Same as~\zxw{Lemma 1}
\end{proof}

\begin{lem}
  \label{lem:qudit-wnodenf}
  \[
    \tikzfig{qudit-nf-generators/wshort}
  \]
\end{lem}
\begin{proof}
  Same as~\zxw{Lemma 2}
\end{proof}
%
%

\begin{lem}
  \label{multiplier-zbox}
  \[
    \tikzfig{lemmas/multipliers/multiplier-zbox}
  \]
\end{lem}
\begin{proof}
  \[
    \tikzfig{lemmas/multipliers/multiplier-zbox-pf}
  \]
  \qedhere
\end{proof}

\begin{lem}
  \label{dualiser-wnode}
  \[
    \tikzfig{lemmas/dualiser-wnode}
  \]
\end{lem}
\begin{proof}
  By map-state duality, we only have to show the following:
  \begin{align*}
    \tikzfig{lemmas/dualiser-wnode-pf-0}\quad
    & \tikzfig{lemmas/dualiser-wnode-pf-1} \\
    & \tikzfig{lemmas/dualiser-wnode-pf-2}
  \end{align*}
  where the penultimate equality follows from~\zxw{Lemma 1} which proves that
  any qudit normal form can be rearranged.
\end{proof}

\begin{lem}
  \label{kzequivalencelm}
  \[
     \tikzfig{qufit-axioms/kzqudit}
     \qquad \Longleftrightarrow \qquad
     \tikzfig{axioms/k1zstate}
  \]
\end{lem}
\begin{proof}
  $\Longrightarrow$: Given the left-hand side equation (LHS), we prove that:
  \begin{align*}
    \tikzfig{lemmas/kzeuivnecv2-pf-0}\quad
    & \tikzfig{lemmas/kzeuivnecv2-pf-1} \\
    & \tikzfig{lemmas/kzeuivnecv2-pf-2}
  \end{align*}
  $\Longleftarrow$: Given the right-hand side equation (RHS), we prove that:
    \begin{align*}
    \tikzfig{lemmas/kzeuivsuf-pf-0}\quad
    & \tikzfig{lemmas/kzeuivsuf-pf-1} \\
    & \tikzfig{lemmas/kzeuivsuf-pf-2}
  \end{align*}
  \qedhere
\end{proof}

\begin{lem}
  \begin{equation}
    \tikzfig{lemmas/triangleocopydit}
    \tag{Bs0}\label{rule:Bs0}
  \end{equation}
  \label{lem:Bs0}
\end{lem}
\begin{proof}
  \[
    \tikzfig{lemmas/triangleocopydit-pf}
  \]
  \qedhere
\end{proof}

\begin{lem}
  \begin{equation}
    \tikzfig{lemmas/zerotoreddit0}
    \tag{Zer}\label{rule:Zer}
  \end{equation}
  \label{lem:Zer}
\end{lem}
\begin{proof}
  \[
    \tikzfig{lemmas/zerotoreddit0-pf}
  \]
  \qedhere
\end{proof}

\begin{lem}
  \begin{equation*}
    \tikzfig{lemmas/zeros-red}
  \end{equation*}
  \label{zeros-red}
\end{lem}
\begin{proof}
  \[
    \tikzfig{lemmas/zeros-red-pf}
  \]
  \qedhere
\end{proof}

\begin{lem}
  \begin{equation}
    \tikzfig{lemmas/pimultiplecpdit}
    \tag{K1}\label{rule:K1}
  \end{equation}
  \label{lem:K1}
\end{lem}
\begin{proof}
  \[
    \tikzfig{lemmas/pimultiplecpdit-pf}
  \]
  \qedhere
\end{proof}

\begin{lem}
  \begin{equation}
    \tikzfig{lemmas/z-push-v}
    \tag{ZV}\label{rule:ZV}
  \end{equation}\label{lem:ZV}
where $\overrightarrow{a_{d-1}} = \left(a_{d-1}, a_{d-1}, \dotsc, a_{d-1}\right)$.
\end{lem}
\begin{proof}
  \[
    \tikzfig{lemmas/z-push-v-pf}
  \]
  \qedhere
\end{proof}
\begin{lem}
  \begin{equation}
    \tikzfig{lemmas/old-va}
    \tag{VA}
  \end{equation}
  \label{lem:VA}
\end{lem}
\begin{proof}
  \[
    \tikzfig{lemmas/old-va-pf}
  \]
  \qedhere
\end{proof}
\begin{lem}
  \begin{equation}
    \tikzfig{lemmas/v-push-w}
    \tag{VW}\label{rule:VW}
  \end{equation}
  \label{lem:VW}
\end{lem}
\begin{proof}
  \[
    \tikzfig{lemmas/v-push-w-pf}
  \]
  \qedhere
\end{proof}
\begin{lem}
  \begin{equation}
    \tikzfig{lemmas/had-decomposition}
  \end{equation}
  \label{lem:HD}
\end{lem}
\begin{proof}
  \[
\tikzfig{lemmas/had-decomposition-pf}
  \]
  \qedhere
\end{proof}

\quditImplied*
\begin{proof}
  We iterate through each Axioms of Ref.~\cite{poorCompletenessArbitraryFinite2023} and show that they can be derived from the rules of $\ZXWf$.

  \paragraph{The ZX-part of the rules are implied as follows:}
  \begin{description}
    \item[(S1):] implied by \textaxref{S1} and \autoref{eq:z-flexsymmetry}.
    \item[(S2):] implied by \textaxref{S2} and \autoref{zid}
    \item[(D1):] implied by \textaxref{D1}
    \item[(Ept):] same as \textaxref{Ept}
    \item[(B2):] same as \textaxref{B2}
    \item[(K0):] implied by \textaxref{K0} and \autoref{kjgalm} and \autoref{zeroemptyditlm}
    \item[(K1):] see \autoref{lem:K1}
    \item[(K2):] same as \textaxref{K2}
    \item[(Zer):] see \autoref{lem:Zer}
    \item[(P1):] implied by \textaxrefs{P1}{D1} and \autoref{zid}
    \item[(H1):] can be proven using only \textaxref{D1}
  \end{description}

  \paragraph{The ZW-part of the rules are implied as follows:}
  \begin{description}
    \item[(Pcy):] implied by \autoref{lem:Pcy}
    \item[(Sym):] follows from \autoref{rule:Sym}
    \item[(BZW):] implied by \textaxref{BZW}
    \item[(Ad):] same as \textaxref{AD}
    \item[(Aso):] follows from \textaxref{WF}
    \item[(WW):] same as \textaxref{WW}
  \end{description}

  \paragraph{The ZXW-part of the rules are implied as follows:}
  \begin{description}
    \item[(Bs0):] see \autoref{lem:Bs0}
    \item[(Bsj):] same as \textaxref{Bsj}
    \item[(TA):] implied by \textaxref{TA}
    \item[(HD):] see \autoref{lem:HD}
    \item[(ZV):] see \autoref{lem:ZV}
    \item[(VA):] see \autoref{lem:VA}
    \item[(VW):] see \autoref{lem:VW}
    \item[(KZ):] consequence of \autoref{lem:KZ} and \autoref{kzequivalencelm}
  \end{description}
\end{proof}

\subsection{Lemmas for the mixed-dimensional setting}

\begin{lem}
  \begin{equation}
    \tikzfig{qufit-lemmas/qufit-phasecopy}
    \tag{Pcy}\label{rule:Pcy}
  \end{equation}
  \label{lem:Pcy}
\end{lem}
\begin{proof}
  \[
    \tikzfig{qufit-lemmas/qufit-phasecopy-pf}
  \]
  \qedhere
\end{proof}

\begin{lem}
  \label{trialgbzwlm}
  Suppose $m\geq 2$.
  Then
  \[
    \tikzfig{lemmas/trialgbzw}
  \]
\end{lem}
\begin{proof}
  We prove this by induction.
  The base case follows from the Rules~\eqref{rule:TA} and~\eqref{rule:BZW}.
  Then, let us suppose the lemma true for $k - 1$ outputs $(\ast)$.
  The inductive step is proved as follows:
  \[
    \tikzfig{lemmas/trialgbzwprf}
  \]
  \qedhere
\end{proof}

\begin{lem}
  \label{trialgebraallconnectionlm}
  \[
    \tikzfig{lemmas/trialgebraallconnection}
  \]
\end{lem}
\begin{proof}
This follows from the Rule~\eqref{rule:BZW} and \autoref{trialgbzwlm}.
\end{proof}

\begin{lem}
  \label{gerneraltrialgebralm}
  \[
    \tikzfig{lemmas/gerneraltrialgebra}
  \]
\end{lem}
\begin{proof}
  This equality can be derived directly from \autoref{trialgebraallconnectionlm} by splitting the leftmost W spider using ~\eqref{rule:WN}.
\end{proof}

\begin{lem}
  \label{lem:KZ}
  \[
    \tikzfig{qufit-lemmas/KZ}
    \tag{KZ}\label{rule:KZ}
  \]
\end{lem}
\begin{proof}
  \[
    \tikzfig{qufit-lemmas/kz-pf}
  \]
  \qedhere
\end{proof}

\begin{lem}
  \label{2wadditionlm}
  \[
    \tikzfig{lemmas/2waddition}
  \]
\end{lem}
\begin{proof}
  The first equation follows from Rule~\eqref{rule:S1} and \autoref{gerneraltrialgebralm}.
  The second equation can be proved the same way as in ~\zxw{Lemma 42}.
\end{proof}

\begin{lem}
  \label{multiplier-zspider}
  \[
    \tikzfig{figures/qufit-lemmas/multiplier-zspider}
  \]
\end{lem}
\begin{proof}
  \[
    \tikzfig{figures/qufit-lemmas/multiplier-zspider-pf}
  \]
  \qedhere
\end{proof}

\begin{lem}
  \label{rule:DAS}
  \[
    \tikzfig{qufit-generators/dimboxassociative}
  \]
\end{lem}
\begin{proof}
  \begin{align*}
    \tikzfig{figures/qufit-lemmas/aso-pf-0}\quad
    & \tikzfig{figures/qufit-lemmas/aso-pf-1} \\
    & \tikzfig{figures/qufit-lemmas/aso-pf-2}
  \end{align*}
  where for the fourth equality, we used the qudit completeness result.
\end{proof}

\begin{lem}
  \label{zspider-split-x}
  \[
    \tikzfig{qufit-lemmas/zspider-split-x}
  \]
\end{lem}
\begin{proof}
  \[
    \tikzfig{qufit-lemmas/zspider-split-x-pf}
  \]
  \qedhere
\end{proof}

\begin{lem}
  \label{n-dimbox-decomp-2}
  \[
    \tikzfig{qufit-lemmas/n-dimbox-decomp-2}
  \]
\end{lem}
\begin{proof}
If $k=1$, then it follows directly from Rule~\eqref{rule:DD}. If $k>1$, then
  \[
    \tikzfig{qufit-lemmas/n-dimbox-decomp-pf-2}
  \]
  \qedhere
\end{proof}

\begin{lem}
  \label{n-dimbox-z-decomp}
  \[
    \tikzfig{qufit-lemmas/n-dimbox-z-decomp}
  \]
  \begin{flalign*}
    \text{where}\quad
    \Pi_j = \prod_{i = j + 1}^{k} m_i \quad
    \text{for } 1 \leq j \leq k - 1,\quad
    \Pi_k \coloneqq 1, \quad
    \text{and}\quad
    m = \prod_{i = 1}^{k} m_i. &&
  \end{flalign*}
\end{lem}
\begin{proof}
  We prove the lemma inductively, in which the base case is \autoref{n-dimbox-decomp-2}.
  Then, let us suppose the lemma true for $k - 1$ outputs $(\ast)$.
  The inductive step is proved as follows:
  \[
    \tikzfig{qufit-lemmas/n-dimbox-z-decomp-pf}
  \]
  \qedhere
\end{proof}

\begin{lem}
  \label{n-dimbox-decomp}
  \[
    \tikzfig{qufit-lemmas/n-dimbox-decomp}
  \]
\end{lem}
\begin{proof}
  This follows from Rule~\eqref{rule:S2} and \autoref{n-dimbox-z-decomp} where we let $l=1$.
\end{proof}

\begin{lem}
  \label{dimboxunitarity-1-qudit-lm}
  \[
    \tikzfig{qufit-lemmas/dimboxunitarity-1-qudit-lm}
  \]
\end{lem}
\begin{proof}
  These two qudit diagrams correspond to the same linear map; hence, the proof follows from the completeness of the qudit ZXW calculus.
\end{proof}
\begin{lem}
  \label{rule:UT1}
  \[
    \tikzfig{qufit-generators/dimboxunitarity-1}
  \]
\end{lem}
\begin{proof}
  \[
    \tikzfig{qufit-lemmas/dimboxunitarity-1-pf}
  \]
  \qedhere
\end{proof}
\begin{lem}
  \label{dim-merge-x}
  \[
    \tikzfig{figures/qufit-lemmas/dim-merge-x}
  \]
  where \quad $0 \leq i \leq m-1,\quad 0 \leq j \leq n-1$.
\end{lem}
\begin{proof}
  \[
    \tikzfig{figures/qufit-lemmas/dim-merge-x-pf}
  \]
  \qedhere
\end{proof}
\begin{lem}
  \label{rule:DC0}
  \[
    \tikzfig{figures/qufit-lemmas/dim-split-x0}
  \]
\end{lem}
\begin{proof}
  \[
    \tikzfig{figures/qufit-lemmas/dim-split-x0-pf}
  \]
  \qedhere
\end{proof}

\begin{lem}
  \label{rule:DC1}
  \[
    \tikzfig{figures/qufit-lemmas/dsplit-k1}
  \]
\end{lem}
\begin{proof}
  \[
    \tikzfig{figures/qufit-lemmas/dsplit-k1-pfv2}
  \]
  \qedhere
\end{proof}

\begin{lem}
  \label{dsplit-zspider}
  \[
    \tikzfig{figures/qufit-lemmas/dsplit-zspider}
  \]
\end{lem}
\begin{proof}
  \[
    \tikzfig{figures/qufit-lemmas/dsplit-zspider-pf-new}
  \]
  where the last equality follows due to the completeness of the qudit ZXW calculus.
\end{proof}

\begin{lem}
  \label{k1splitdiscard}
  \[
    \tikzfig{figures/qufit-lemmas/k1-split-discard}
  \]
\end{lem}
\begin{proof}
  \[
    \tikzfig{figures/qufit-lemmas/k1-split-discard-pf}
  \]
  \qedhere
\end{proof}

\begin{lem}
  \label{lem:MZD}
  \[
    \tikzfig{figures/qufit-lemmas/mzd}
  \]
\end{lem}
\begin{proof}
  \[
    \tikzfig{figures/qufit-lemmas/mzd-pf}
  \]
  where the penultimate step is due to qudit completness.
\end{proof}

\begin{lem}
  \label{lem:id-mixed-z-split}
  \[
    \tikzfig{figures/qufit-lemmas/id-z-split}
  \]
\end{lem}
\begin{proof}
  \[
    \tikzfig{figures/qufit-lemmas/id-z-split-pf}
  \]
  \qedhere
\end{proof}

\begin{lem}
  \label{multiplier-dim-lm-2}
  \[
    \tikzfig{figures/qufit-lemmas/multiplier-dim-lm-2}
  \]
\end{lem}
\begin{proof}
  These two qudit diagrams correspond to the same linear map; hence, the proof follows from the completeness of the qudit ZXW calculus.
\end{proof}
\begin{lem}
  \label{multiplier-dim-lm}
  \[
    \tikzfig{figures/qufit-lemmas/multiplier-dim-lm}
  \]
\end{lem}
\begin{proof}
  \[
    \tikzfig{figures/qufit-lemmas/multiplier-dim-lm-pf}
  \]
  \qedhere
\end{proof}
\begin{lem}
  \label{rule:MS}
  \[
    \tikzfig{figures/qufit-lemmas/multiplier-dim}
  \]
\end{lem}
\begin{proof}
  \[
    \tikzfig{figures/qufit-lemmas/multiplier-dim-pf}
  \]
  \qedhere
\end{proof}

\begin{lem}
  \label{dimbox1-spider-lm}
  \[
    \tikzfig{qufit-lemmas/dimbox1-spider-lm}
  \]
\end{lem}
\begin{proof}
  By \autoref{lem:post-compose-mixed-z}, we only need to show the following:
  \[
    \tikzfig{qufit-lemmas/dimbox1-spider-lm-pf}
  \]
  where the penultimate equality follows from qudit completeness.
\end{proof}

\begin{lem}
  \label{dimbox1-spider}
  \[
    \tikzfig{qufit-lemmas/dimbox1-spider}
  \]
\end{lem}
\begin{proof}
  This follows from \autoref{dboxsquarelm} and \autoref{dimbox1-spider-lm}.
\end{proof}

\begin{lem}
  \label{rule:DPM}
  \[
    \tikzfig{figures/qufit-lemmas/dpm}
  \]
\end{lem}
\begin{proof}
  \[
    \tikzfig{figures/qufit-lemmas/dpm-new-pf}
  \]
  The other equality can be proved similarly.
\end{proof}

\begin{lem}
  \label{n-dimbox1-spider}
  \[
    \tikzfig{qufit-lemmas/n-dimbox1-spider}
  \]
\end{lem}
\begin{proof}
  We prove the lemma inductively, in which the base case is \autoref{dimbox1-spider}.
  Then, let us suppose the lemma true for $k - 1$ outputs $(\ast)$.
  The inductive step is proved as follows:
  \[
    \tikzfig{qufit-lemmas/n-dimbox1-spider-pf}
  \]
  \qedhere
\end{proof}

\begin{lem}
  \label{rule:BZD}
  \[
    \tikzfig{figures/qufit-axioms/ax2}
  \]
\end{lem}
\begin{proof}
  \[
    \tikzfig{figures/qufit-lemmas/BZD-pf}
  \]
  \qedhere
\end{proof}

\begin{lem}
  \label{rule:UT2}
  \[
    \tikzfig{figures/qufit-lemmas/dimboxunitarity-2}
  \]
\end{lem}
\begin{proof}
  \[
    \tikzfig{qufit-lemmas/dimboxunitarity-2-pf}
  \]
  \qedhere
\end{proof}

\begin{lem}
  \label{DZS-lm-1}
  \[
    \tikzfig{figures/qufit-lemmas/ax11-lm-1}
  \]
\end{lem}
\begin{proof}
  \[
    \tikzfig{figures/qufit-lemmas/ax11-lm-1-pf}
  \]
  \qedhere
\end{proof}

\begin{lem}
  \label{DZS-lm-2}
  \[
    \tikzfig{figures/qufit-lemmas/ax11-lm-2}
  \]
\end{lem}
\begin{proof}
  \[
    \tikzfig{figures/qufit-lemmas/ax11-lm-2-pf}
  \]
  where
  $T_{d \minu 1}=\overbrace{(\underbrace{0,\dotsc, 1}_{d - 1}, \dotsc, 0)}^{d^m - 1}$
  \qedhere
\end{proof}

\begin{lem}
  \label{rule:DZS}
  \[
    \tikzfig{figures/qufit-axioms/ax11}
  \]
\end{lem}
\begin{proof}
  \[
    \tikzfig{figures/qufit-lemmas/ax11-pf}
  \]
  \qedhere
\end{proof}

\begin{lem}
  \label{zspiders-split}
  \[
    \tikzfig{figures/qufit-lemmas/zspiders-split}
  \]
\end{lem}
\begin{proof}
  \[
    \tikzfig{figures/qufit-lemmas/zspiders-split-pf}
  \]
  \qedhere
\end{proof}

\begin{lem}
  \label{zspider-split-discard}
  \[
    \tikzfig{figures/qufit-lemmas/zspider-split-discard}
  \]
\end{lem}
\begin{proof}
  \[
    \tikzfig{figures/qufit-lemmas/zspider-split-discard-pf}
  \]
  \qedhere
\end{proof}


\begin{lem}
  \label{1split-to-zspider}
  \[
    \tikzfig{figures/qufit-lemmas/1split-to-zspider}
  \]
\end{lem}
\begin{proof}
  \[
    \tikzfig{figures/qufit-lemmas/1split-to-zspider-pf}
  \]
  \qedhere
\end{proof}


\begin{lem}
  \label{1split-sym}
  \[
    \tikzfig{figures/qufit-lemmas/1split-sym}
  \]
\end{lem}
\begin{proof}
  This follows from \autoref{dimbox1-spider} and Rule~\eqref{rule:S1}.
\end{proof}




\begin{lem}
  \label{rule:PCZ}
  \[
    \tikzfig{figures/qufit-lemmas/pcz}
  \]
\end{lem}
\begin{proof}
  \[
    \tikzfig{figures/qufit-lemmas/pcz-lm-2-pfv2}
  \]
  \qedhere
\end{proof}


\begin{lem}
  \label{rule:BWD}
  \[
    \tikzfig{figures/qufit-lemmas/bwd}
  \]
\end{lem}
\begin{proof}
  \[
  \tikzfig{figures/qufit-lemmas/bwd-lm-1-pf-0v2}
 \]
  \qedhere
\end{proof}


\begin{lem}
  \label{rule:PCW}
  \[
    \tikzfig{qufit-lemmas/w-commute-dsplit-zspider}
  \]
\end{lem}
\begin{proof}
\[
    \tikzfig{qufit-lemmas/w-commute-dsplit-zspider-prf}
  \]
  \qedhere
\end{proof}



\begin{lem}
  \label{lem:split-mult-add}
  \[
    \tikzfig{figures/qufit-trace/lem-1}
  \]
\end{lem}
\begin{proof}
  \[
    \tikzfig{figures/qufit-trace/lem-1-pf}
  \]
  \qedhere
\end{proof}

\begin{lem}
  \label{lem:dsplit-general-trialg}
  \[
    \tikzfig{figures/qufit-lemmas/dsplit-general-trialg}
  \]
\end{lem}
\begin{proof}
 Firstly,
  \[
    \tikzfig{figures/qufit-lemmas/dsplit-general-trialg-pf}
  \]
  Starting from the third diagram above, the second equality can be proved as follows:
  \[
    \tikzfig{figures/qufit-lemmas/dsplit-general-trialg-pf-2}
  \]
  \qedhere
\end{proof}

\begin{lem}
  \label{rule:DKZ}
  \[
    \tikzfig{figures/qufit-lemmas/dkz}
  \]
\end{lem}
\begin{proof}
  \begin{align*}
    \tikzfig{figures/qufit-lemmas/dkz-pf-0}\quad
    & \tikzfig{figures/qufit-lemmas/dkz-pf-1} \\
    & \tikzfig{figures/qufit-lemmas/dkz-pf-2} \\
    & \tikzfig{figures/qufit-lemmas/dkz-pf-3}
  \end{align*}
  \qedhere
\end{proof}
\begin{lem}
  \label{rule:DZC}
  \[
    \tikzfig{figures/qufit-lemmas/dzc}
  \]
   where $d=mn$.
\end{lem}
\begin{proof}
  By \autoref{lem:post-compose-mixed-z}, we only need to show the following:
  \begin{align*}
    \tikzfig{figures/qufit-lemmas/dzc-pf-0}\quad
    & \tikzfig{figures/qufit-lemmas/dzc-pf-1} \\
    & \tikzfig{figures/qufit-lemmas/dzc-pf-2}
  \end{align*}
  where the penultimate equality is due to qudit completeness.
\end{proof}

\begin{lem}
  \label{k1dimchange}
  \[
    \tikzfig{figures/qufit-lemmas/k1dimchange}
  \]
\end{lem}
\begin{proof}
  \begin{align*}
    \tikzfig{qufit-lemmas/k1dimchange-pf-0}
    & \tikzfig{qufit-lemmas/k1dimchange-pf-1} \\
    & \tikzfig{qufit-lemmas/k1dimchange-pf-2}
  \end{align*}
  \qedhere
\end{proof}

\begin{lem}
  \label{k1wmultilm}
  \[
    \tikzfig{qufit-generators/k1wmultiplierequiv}
  \]
  which is equivalent to
  \[
    \tikzfig{qufit-generators/k1wmultiplier}
  \]
  where $d=mn$.
\end{lem}
\begin{proof}
  \[
    \tikzfig{qufit-generators/k1wmultiplierequivprf}
  \]
  \qedhere
\end{proof}



\quditqufitequivalence*
\begin{proof}
  \begin{align*}
    \tikzfig{qufit-normal-form/qufitnormalformtrans-pf0}\quad
    & \tikzfig{qufit-normal-form/qufitnormalformtrans-pf1} \\
    & \tikzfig{qufit-normal-form/qufitnormalformtrans-pf2}
  \end{align*}
  \qedhere
\end{proof}

\section{Proof of completeness}

\subsection{Generators}

%
%
%
%

\dimsplitterlm*
\begin{proof}
  \begin{align*}
    \tikzfig{qufit-generators/split-pf-0} \quad
    & \tikzfig{qufit-generators/split-pf-1} \\
    & \tikzfig{qufit-generators/split-pf-2} \\
    & \tikzfig{qufit-generators/split-pf-3}
  \end{align*}
  \qedhere
\end{proof}

\subsection{Partial trace}

\allowdisplaybreaks
\setlength{\jot}{20pt}

\begin{lem}
  \label{lem:tracelm-1}
  \[
    \tikzfig{figures/qufit-trace/tracelm-2}
  \]
  where $\displaystyle \Sigma \coloneqq \sum\limits_{j = 0}^{d \minu 1} \alpha_j$.
\end{lem}
\begin{proof}
  \begin{align*}
    &\tikzfig{figures/qufit-trace/tracepf-2-1}\\
    &\tikzfig{figures/qufit-trace/tracepf-2-2}
  \end{align*}
  \qedhere
\end{proof}

\begin{lem}
  \label{lem:tracelm-2}
  \[
    \tikzfig{figures/qufit-trace/tracepf-n-1}
  \]
\end{lem}
\begin{proof}
  \begin{align*}
    &\tikzfig{figures/qufit-trace/tracepf-n-1-pf-1}\\
    &\tikzfig{figures/qufit-trace/tracepf-n-1-pf-2}
  \end{align*}
  \qedhere
\end{proof}

\partialtracelm*
\begin{proof}
  \begin{align*}
    &\tikzfig{figures/qufit-trace/tracepf-1}\\
    &\tikzfig{figures/qufit-trace/tracepf-2}\\
    &\tikzfig{figures/qufit-trace/tracepf-3}\\
    &\tikzfig{figures/qufit-trace/tracepf-4}
  \end{align*}
  \noindent
  Here, indices can be grouped based on the number of connection to the $m_i \land m_j$ X-spider:
  \begin{enumerate}
    \item Indices that have no connection to the $m_i \land m_j$ X-spider, that is, $e_{k,j} = e_{k,i}$.
    \item Indices that had some non-zero connection to the $m_i \land m_j$ X-spider that is, $e_{k,j} \neq e_{k,i}$.
  \end{enumerate}
  We first show how elements of Group 1 are combined.
  We consider a set of Group 1 indices $k_0, \cdots, k_{m_i \minu 1}$ such that their connection to each X-spider equals.
  That is, for all $0 \leq j \leq s - 1$, the number of connection to the $j$-th X-spider equal from all Z-boxes with index $k_\ell$ for $0 \leq \ell \leq m_i \minu 1$.
  Diagrammatically, this is depicted as follows with $k$ being a representative of $[k_\ell]$:
  \begin{align*}
    &\tikzfig{figures/qufit-trace/tracepf-3-01}\\
    &\tikzfig{figures/qufit-trace/tracepf-3-02}
  \end{align*}
  After applying \autoref{lem:tracelm-1} to all $(m-3)$ possible sub-diagrams with equal connections as shown above, indices of Group 2 can be rewritten and eliminated as follows:
  \begin{align*}
    &\tikzfig{figures/qufit-trace/tracepf-3-1}\\
    &\tikzfig{figures/qufit-trace/tracepf-3-15}\\
    &\tikzfig{figures/qufit-trace/tracepf-3-2}
  \end{align*}
  \qedhere
\end{proof}

\subsection{Tensor product}

\allowdisplaybreaks
\setlength{\jot}{20pt}

This section proves \autoref{lem:tensor}, that is, the tensor product of two qufinite normal form can be rewritten into a single normal form.
The proof is split into several sub-lemmas proving small bits of the main lemma.

\begin{lem}
  \label{lem:tensorlm-1}
  \[
    \tikzfig{figures/qufit-tensor/tensorlem-1}
  \]
\end{lem}
\begin{proof}
  \begin{align*}
    &\tikzfig{figures/qufit-tensor/tensorpf-1}\\
    &\tikzfig{figures/qufit-tensor/tensorpf-2}\\
    &\tikzfig{figures/qufit-tensor/tensorpf-3}\\
  \end{align*}
  \qedhere
\end{proof}

\begin{lem}
  \label{lem:tensorlm-2}
  \[
    \tikzfig{figures/qufit-tensor/tensorlem-2}
  \]
\end{lem}
\begin{proof}
  \begin{align*}
    &\tikzfig{figures/qufit-tensor/tensorpf-3-2}\\
    &\tikzfig{figures/qufit-tensor/tensorpf-3-3}\\
    &\tikzfig{figures/qufit-tensor/tensorpf-3-4}\\
    &\tikzfig{figures/qufit-tensor/tensorpf-3-5}\\
  \end{align*}
  \qedhere
\end{proof}

\begin{lem}
  \label{lem:tensorlm-3}
  \[
    \tikzfig{figures/qufit-tensor/tensorlem-3}
  \]
\end{lem}
\begin{proof}
  \begin{align*}
    &\tikzfig{figures/qufit-tensor/tensorlem-3-pf-1}\\
    &\tikzfig{figures/qufit-tensor/tensorlem-3-pf-2}\\
    &\tikzfig{figures/qufit-tensor/tensorlem-3-pf-3}\\
    &\tikzfig{figures/qufit-tensor/tensorlem-3-pf-4}
  \end{align*}
  \qedhere
\end{proof}

\tensorlm*
\begin{proof}
  \begin{align*}
    &\tikzfig{figures/qufit-tensor/tensorpf2-1}\\
    &\tikzfig{figures/qufit-tensor/tensorpf2-2}\\
    &\tikzfig{figures/qufit-tensor/tensorpf-10}\\
    &\tikzfig{figures/qufit-tensor/tensorpf-11}\\
    &\tikzfig{figures/qufit-tensor/tensorpf-12}\\
    &\tikzfig{figures/qufit-tensor/tensorpf-13}
  \end{align*}
  \qedhere
\end{proof}

\end{document}